\newcommand\nth{\textsuperscript{th}\xspace}
\newcolumntype{Z}{>{\RaggedRight\arraybackslash}X}
\DeclareMathOperator{\EX}{\mathbb{E}}
\DeclareMathOperator{\PR}{\mathbb{P}}
\newenvironment{claim}[1]{\par\noindent{Claim:}\space#1}{}
\begin{document}

\title{Triangle Preferential Attachment Has Power-law Degrees and Eigenvalues; Eigenvalues Are More Stable to Network Sampling}
\author{Nicole Eikmeier, Purdue University \\ David F. Gleich, Purdue University}

\maketitle

\begin{abstract}
Preferential attachment models are a common class of graph models which have been used to explain why power-law distributions appear in the degree sequences of real network data. One of the things they lack, however, is higher-order network clustering, including non-trivial clustering coefficients. In this paper we present a specific Triangle Generalized Preferential Attachment Model (TGPA) that, by construction, has nontrivial clustering. We further prove that this model has a power-law in both the degree distribution and eigenvalue spectra. We use this model to investigate a recent finding that power-laws are more reliably observed in the eigenvalue spectra of real-world networks than in their degree distribution. One conjectured explanation for this is that the spectra of the graph is more robust to various sampling strategies that would have been employed to collect the real-world data compared with the degree distribution. 
Consequently, we generate random TGPA models that provably have a power-law in both, and sample subgraphs via forest fire, depth-first, and random edge models. We find that the samples show a power-law in the spectra even when only 30\% of the network is seen. Whereas there is a large chance that the degrees will not show a power-law. Our TGPA model shows this behavior much more clearly than a standard preferential attachment model. This provides one possible explanation for why power-laws may be seen frequently in the spectra of real world data.
\end{abstract}

%
%
%
%
%

\marginnote[18\baselineskip]{Research supported in part by
	NSF CAREER CCF-1149756, IIS-1422918, IIS-1546488, NSF Center for Science of Information STC, CCF-0939370, DOE  DE-SC0014543, DARPA SIMPLEX, and the Sloan Foundation.
}

\section{Introduction}
\label{sec:Intro}

The idea of preferential attachment (PA) has a lengthy history in explaining ``rich-get-richer'' models~\cite{Yule-1925-power-law,Price-1976-preferential-attachment}. In the context of networks, a preferential attachment model suggests that when agents join a network, they form links to existing nodes with large degrees. These models offer a simple local rule that helps explain the presence of highly-skewed or power-law degree distributions in real-world networks~\cite{barabasi1999emergence}. While a simple and compelling mathematical model, there are weaknesses in the relationship between PA models and real-world data. One of the most striking is the lack of clustering in PA network models. Consequently, there has been a line of work on generalized PA models that include ways to address the lack of clustering. First,~\citet{holme2002growing} proposed a triangle PA model, where agents arrive and link to a node based on its degree and also link to a neighbor of that node to form a triangle. Later,~\citet{ostroumova2013generalized} generalized a family of PA models and showed that they had power-law degree distributions and high-clustering.

Our work follows in this vein, although we adapt a slightly different notion of a triangle PA model that builds on a recent proposal to show how preferential attachment could give a power-law with any exponent~\cite{avin2017improved}. The specific \emph{Triangle Generalized Preferential Attachment} model we use has two slightly different forms as explained in Section~\ref{sec:TGPA}. The two forms are used to greatly simplify the analysis of the resulting properties. We do not believe there to be qualitative differences between them. Formally, we show that these models have a power-law in the degree distribution (Theorem~\ref{TGPATheorem1}, Corollary~\ref{corr:tpga-power-law}) as well as a power-law in the eigenvalues of the adjacency matrix (Theorem~\ref{TGPATheorem2}).

We also find empirically that our TGPA model has higher-order clustering in terms of higher-order clique closures~\cite{yin2017higher} that is characteristic of real-world data (Section~\ref{sec:clust}).

Our interest in the TGPA model stems from our recent finding on the reliable presence of power-laws in the eigenvalue spectrum of the adjacency matrix~\cite{Eikmeier-2017-power-laws}. Specifically,~\citet{Eikmeier-2017-power-laws} found that real-world networks of a variety of types were more likely to have a statistically significant power-law in the eigenvalues of the adjacency matrix than in the degree distribution. This observation presents a simple question, might this behavior be expected in light of how real-world network data are collected? To be specific, real-world network data reflect two types of sampling artifacts. They are often built from a process run on a larger dataset. Consider how web and social networks are often \emph{crawled} by programs that use breadth-first or related crawling strategies. Second, the crawled data itself represents a sample of some underlying (and unknown) latent network~\cite{schoenebeck2013potential}. Again, note that the social links on networks such as Facebook and Twitter only represent a sample of some unobserved \emph{true} social network. Because of the way that individuals join these networks, forest-fire models are often used to simulate this type of artifact.

Consequently, we study how often \emph{samples} of TGPA models have statistically significant power laws in their degrees and eigenvalues (Section~\ref{sec:per}). These results (Figure~\ref{fig:per}) offer compelling evidence that the eigenvalues of the adjacency matrix robustly indicate the presence of a power-law, with more reliability than the degrees. It should be noted that the presence (or lack thereof) of power-laws in real world data has been often debated~\cite{MeuselVignaLehmbergEtAl2015,GjokaKurantButtsEtAl2010,Clauset-2018-pl-are-rare}.  For that reason, we study models where they are unambiguously present. Although other PA models have the needed property of power-laws in both spectra and degrees, we find that the differences in behavior between the sampled eigenvalues and degrees are less clear than in TGPA.

In summary, the primary contributions of this manuscript are:
\begin{compactenum}
	\item We extend the results presented on the Generalized Preferential Attachment Model (in~\citet{avin2017improved}), to show the eigenvalues follow a power-law distribution. (Section~\ref{sec:GPAResults})
	\item We present the Triangle Generalized Preferential Attachment Model (TGPA): A model which imposes higher order structure directly into the network. (Section~\ref{sec:TGPA})
	\item We conduct extensive analysis of TGPA to show that the degrees follow a power-law distribution with an exponent which can range between $(1,\infty)$ (Section~\ref{sec:DegreeTGPA}), and that the eigenvalues follow a power-law distribution. (Section~\ref{sec:SpectraTGPA})
	\item We use TGPA to support a conjecture on why power-laws are observed more often in spectra of networks, and study the results of perturbing the TGPA model. (Section~\ref{sec:per})
\end{compactenum}

\section{Preliminaries and Related Work}
\label{sec:related}

Denote a graph $G$ by its set of vertices $V$ and edges $E$. A graph with $n$ vertices can be represented as an $n \times n$ adjacency matrix $\mA$, where $\mA_{ij} = 1$ if edge $(i,j)$ is in the graph, and $\mA_{ij} = 0$ otherwise. The degree of vertex $i$ is the number of vertices $j$ such that $\mA_{ij} = 1$.

We will be concerned with graph models that evolve over time. There are a huge diversity of graph generation schemes, many of which have been analyzed in theory and in practice. For example latent space graphs~\cite{lattanzi2009affiliation} and Geometric Preferential Attachment. Start with some set of vertices and edges, $G_0 = (V_0,E_0)$. At each time step $t = 1,2,\ldots$ perform some action on $G_{t-1}$ (such as adding new vertices or edges) to obtain $G_t = (V_t, E_t)$. Continue until the graph is sufficiently large. Denote the degree of vertex $v$ at time $t$ to be $d_t(v)$. Let $e_t$ denote the number of edges at time $t$, and let $m_{k,t}$ be the number of nodes at time $t$ with degree $k$.

\subsection{Preferential Attachment}
\label{sec:PA}
Preferential attachment (PA) describes a mechanism of graph evolution in which nodes with higher degree tend to continue gaining neighbors. When a new node $u$ is added to the graph at time $t$, choose another existing vertex $v$ with probability proportional to its degree. Formally, choose vertex $v$ with probability 
\begin{equation}
\label{eqn:gamma}
\gamma_{t}(v) =  \frac{d_{t-1}(v)}{\sum_{w \in V_{t-1}} d_{t-1}(w)}.
\end{equation}
Then add an edge connecting $u$ to $v$. PA is meant to model the \emph{power-law} behavior that is often seen in real-world networks~\cite{FaloutsosFaloutsosFaloutsos1999,Huberman2001,MedinaMattaByers2000}, that is a few vertices tend to have very large degree while most vertices have fairly low degree. A set of values $ x_1, x_2, \ldots x_k $ satisfies a power-law if it is drawn from a probability distribution $p(x) \propto x^{-\beta}$ for some $\beta$.

The PA graph model is found in a few different forms. In the model by~\citet{barabasi1999emergence}, often called the BA model, at every new time step, a new vertex is formed with $m$ edges. Each of the edges is then connected to an existing node chosen using PA, i.e. based on their degrees.

In a slight variation found in~\citet{chung2006complex,cooper2001general}, at each time step $t$, a new node is added with probability $p$. Along with the new node is an edge between the new node and an existing node picked via PA. With probability $1-p$ a new edge is added between two existing nodes, both chosen via PA.
These two models generate slightly different distributions, but fundamentally give very similar graphs. We present our model TGPA in two forms matching these differences (Section~\ref{sec:TGPA}).

In the next sections we discuss a few variations of the PA model. (sections~\ref{sec:GPA},~\ref{sec:Holme}). There exist other variations of PA~\cite{toivonen2006model,zadorozhnyi2015growing,ostroumova2013generalized,saramaki2004scale} which we will not detail here.

\subsection{Generalized Preferential Attachment}
\label{sec:GPA}
The Generalized Preferential Attachment Model (GPA) was defined by~\citet{avin2017improved}. In this model, in addition to adding new vertices and edges, there is also an option in each time step of adding a new \emph{component}. Furthermore, the parameters may change over time, if desired. Start with an arbitrary initial non-empty graph $G_0$. For time $t \geq 1$, the graph $G_t$ is constructed by performing either a \textit{node event} with probability $p_t \in [0,1]$, an \textit{edge event} with probability $r_t \in [0,1-p_t]$, or a \textit{component event} with probability $q_t = 1 - p_t - r_t$. In a node event, a new vertex $v$ is added to the graph, along with an edge $(u,v)$ where $u$ is chosen from $G_{t-1}$ with probability 
$\gamma_t(u)$. In an edge event, a new edge $(u,w)$ is added, with $u$ and $w$ both nodes in $G_{t-1}$, and they are chosen with probability $\gamma_t(u) \cdot \gamma_w(u)$. And in a component event, two new nodes $v_1,v_2$ are added along with edge $(v_1,v_2)$. Exactly one edge is added at each time step, so the number of edges in $G_t$ is  equal to $e_0 + t$.

The key difference of this model defined by~\citet{avin2017improved} over the PA model discussed in Section \ref{sec:PA} is the ability to add new components to the graph. In~\citet{avin2017improved}, it is proved that the degree distribution follows a power-law. In this manuscript we further prove that the eigenvalues follow a power-law distribution. (See Section \ref{sec:GPAResults}).

We will also work with a slight variation of the GPA model, along the lines of the alternate version of the PA model defined in~\citet{flaxman2005high,barabasi1999emergence} and discussed in Section~\ref{sec:PA}. Start with an empty graph. At time $t = 1,2,\ldots$ do one of the following: 
With probability $p$ add a new vertex $v_t$ and an edge from $v_t$ to some other vertex in $u$ where $u$ is chosen with probability
\begin{equation}%
	\label{eq:PAwithloops}
Pr[u = v_i] = \left\{  \begin{array}{ ll } 
	\frac{d_t(v_i)}{2t-1}, & \text{if} \,\, v_i \neq v_t \\
	\frac{1}{2t-1}, & \text{if}\,\, v_i = v_t
	\end{array} \right.  ;
\end{equation}%
And with probability $1-p$ add two new vertices and an edge between them. For some constant $m$, every $m$ steps contract the most recent $m$ vertices added through the PA step to form a super vertex.
Notice that Equation~\eqref{eq:PAwithloops} is not quite the same as $\gamma_t$ in Equation~\eqref{eqn:gamma}. Equation~\eqref{eq:PAwithloops} allows for nodes to be added with self loops. In both versions loops are allowed in the edge step. Regardless, the allowance of self loops has little effect as the graph becomes large, and we remove all self-loops in our final graph for experimental analysis. 


\subsection{Triad Formation}
\label{sec:Holme}
\citet{holme2002growing} introduced a Triad Formation step into the BA version of the PA model (see Section \ref{sec:PA}). After each PA step in which a new vertex $v$ is added and some edge is added $(v,u)$, a triangle is closed with probability $p_t$ by choosing a neighbor of $u$, $u_2$, and adding edge $(v,u_2)$. An example network is shown in Figure~\ref{fig:examples} under `Holme'. The average number of triad closures per added vertex is $m_t = (m-1)p_t$.  It is shown in~\citet{holme2002growing} that the network follows a power-law in the degrees with an exponent of 3, and has clustering coefficients which can be tuned by the parameter $m_t$. Our model incorporates something very similar to this triad formation, but with less regular structure due to an added component step, and with a larger range of possible power-law exponents. See Sections~\ref{sec:TGPA},~\ref{sec:DegreeTGPA}.

\subsection{Higher Order Features in Graphs}
\label{sec:HigherOrder}
Recently, there has been interest in analyzing the higher order features in graphs~\cite{yin2017higher, Grilli-2017-higher-order, Rosvall-2014-memory, Xu-2016-higher-order, benson2016higher, Benson-2017-srw}. One of the earlier motivations for this direction is the famous paper by~\citet{milo_network_motifs} on the presence of motifs in real world networks.  Likewise, there are new models which aim to match these higher order features. For example the triad formation model described in Section~\ref{sec:Holme}~\cite{holme2002growing}, and the family of PA models~\cite{ostroumova2013generalized} discussed in Section~\ref{sec:Intro}. Another model, HyperKron, places a distribution over hyperedges and inserts motifs instead of edges~\cite{Eikmeier-preprint-HyperKron} and is specifically shown to have higher order clustering.

\section{Eigenvalue Power-law in GPA}
\label{sec:GPAResults}
In this section, we present results for the Generalized Preferential Attachment model presented in~\citet{avin2017improved} and discussed in Section~\ref{sec:GPA}, relating to the distribution of the eigenvalues of a graph formed in the model. Note that in order to get our desired result (Theorem~\ref{Theorem2}), we also prove that the degree distribution has a power-law distribution (Theorem~\ref{Theorem1}). This was already proven in~\citet{avin2017improved}, but the version of our proof is useful in order to obtain Theorem~\ref{Theorem2}. The results and proofs mirror those in~\citet{flaxman2005high}, but provide a useful step towards the results on the TGPA model in Section~\ref{sec:SpectraTGPA}.  Proofs of Lemmas \ref{Lemma1} and \ref{Lemma2}, and Theorem~\ref{Theorem1} are in the supplemental material due to space.

Fix parameter $p$. Denote $G_t^m$ as the Generalized Preferential Attachment Graph at time $t$ with contractions of size $m$.
\begin{lemma}
	\label{Lemma1}
	Let $d_t(s)$ be the degree of vertex $s$ in $G_t^m$, for any time $t$ after $s$ has been added to the graph. Let $a^{(k)} = a(a+1)(a+2)\cdots (a+k-1)$ be the rising factorial function. Let $s'$ be the time at which node $s$ arrives in the graph. Then for any positive integer $k$,
	\[  \mathbb{E}[(d_t(s))^{(k)} ]\leq (2m)^{(k)} 2^{pk/2} \left(  \frac{t}{s'}\right)^{pk/2}    \]
\end{lemma}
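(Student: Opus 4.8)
The plan is to derive an exact multiplicative recurrence for $\mathbb{E}[(d_t(s))^{(k)} \mid G_{t-1}]$ over one time step, iterate it from the arrival time $s'$ up to $t$, and bound the resulting product. The reason the rising factorial is the right quantity to track is the elementary identity $(d+1)^{(k)} = \tfrac{d+k}{d}\,d^{(k)}$, equivalently $(d+1)^{(k)} - d^{(k)} = \tfrac{k}{d}\,d^{(k)}$, which linearizes the increment: a unit increase in degree multiplies $d^{(k)}$ by a factor whose denominator cancels the degree appearing in the preferential-attachment selection probability.

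First I would compute the one-step conditional expectation. Fix $s$ and a time $t > s'$, and condition on $G_{t-1}$. The degree of $s$ increases by one exactly when a node event occurs (probability $p$) and the selection in Equation~\eqref{eq:PAwithloops} lands on $s$, which happens with probability $d_{t-1}(s)/(2t-1)$; otherwise $d_t(s) = d_{t-1}(s)$, since component events and selections of other vertices (including self-loops) leave $s$ untouched. Using the identity above, the weighted increment is
\[
  p\cdot \frac{d_{t-1}(s)}{2t-1}\bigl[(d_{t-1}(s)+1)^{(k)} - (d_{t-1}(s))^{(k)}\bigr] = \frac{pk}{2t-1}\,(d_{t-1}(s))^{(k)},
\]
so that
\[
  \mathbb{E}\bigl[(d_t(s))^{(k)} \mid G_{t-1}\bigr] = \left(1 + \frac{pk}{2t-1}\right)(d_{t-1}(s))^{(k)}.
\]
The crucial point is that a contraction step does not disturb this relation: once $s$ has been formed, it is never again among the $m$ most-recently-added vertices, so merging other vertices is a pure relabeling that leaves $d_{t-1}(s)$ fixed, and the recurrence holds at every step $t > s'$.

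Next I would take total expectations and telescope, obtaining
\[
  \mathbb{E}[(d_t(s))^{(k)}] = \mathbb{E}[(d_{s'}(s))^{(k)}]\prod_{\tau = s'+1}^{t}\left(1 + \frac{pk}{2\tau-1}\right).
\]
The initial factor is controlled by observing that a freshly-formed super vertex aggregates $m$ recently added preferential-attachment vertices, each contributing at most two to the degree count, so $d_{s'}(s) \le 2m$ deterministically and hence $\mathbb{E}[(d_{s'}(s))^{(k)}] \le (2m)^{(k)}$. For the product I would use $1 + x \le e^x$ together with the integral comparison $\sum_{\tau=s'+1}^{t}\tfrac{1}{2\tau-1} \le \tfrac12\ln\tfrac{2t-1}{2s'-1} \le \tfrac12\ln\tfrac{2t}{s'}$ (valid for $s'\ge 1$); exponentiating yields $2^{pk/2}\,(t/s')^{pk/2}$, where the factor $2^{pk/2}$ is exactly the contribution of the ``$-1$'' in the denominator $2\tau-1$. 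Combining the two estimates gives the claimed inequality.

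The main obstacle I anticipate is not the recurrence itself, which is clean, but the model-specific bookkeeping around contractions: one must verify that the normalization $2t-1$ remains consistent once super vertices are present, confirm that $s$ is genuinely inert under all later contractions, and justify the deterministic bound $d_{s'}(s)\le 2m$ on the degree of a newly contracted vertex. Once these facts are pinned down, the remaining manipulations are routine.
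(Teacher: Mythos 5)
Your proposal is correct and follows essentially the same route as the paper's proof: the same one-step recurrence $\mathbb{E}[(d_t(s))^{(k)}\mid G_{t-1}] = \bigl(1+\tfrac{pk}{2t-1}\bigr)(d_{t-1}(s))^{(k)}$, the same iteration back to the arrival time with the deterministic bound $d_{s'}(s)\le 2m$, and the same $1+x\le e^x$ plus integral-comparison step yielding the factor $2^{pk/2}(t/s')^{pk/2}$. Your version is, if anything, slightly more careful than the paper's in making the rising-factorial identity explicit and in justifying that contractions of later vertices do not disturb the recurrence.
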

\begin{proof}
	Denote $G_t^m$ as the graph at time $t$ with contractions of size $m$. Let $Z_t = d_t^m(s)$ be the degree of vertex $s$ at time $t$, and $Y_t$ an indicator for the event that the edge added at time $t$ is incident to $s$. Then 
\begin{equation}%
\label{eq:expected1}
\begin{aligned}
\mathbb{E}[Z_t^{(k)}] &= \mathbb{E}[ \mathbb{E}[ (Z_{t-1} + Y_t)^{(k)} ]  | Z_{t-1}   ]   \\ &= \mathbb{E}\left[ Z_{t-1}^{(k)}\left(1 - p\cdot \frac{Z_{t-1}}{2t-1} \right) + (Z_{t-1} + 1)^{(k)} \left(p\cdot \frac{Z_{t-1}}{2t-1} \right)      \right]  \\
&= \mathbb{E} [ Z_{t-1}^{(k)}] \left(1 + \frac{pk}{2t-1} \right). \nonumber
\end{aligned}
\end{equation}%
Apply this relationship iteratively, down to the time when node $s$ was added (denoted as $s'$). Also note that the degree of $s$ at time $s'$ is bounded by $2m$ (if all m edges were added as self loops).
\begin{equation}%
\begin{aligned}
\mathbb{E}(Z_t^{(k)}) &= \prod_{t' = s'}^{t} \left(  1 + \frac{pk}{2t' - 1} \right) \leq (2m)^{(k)} \prod_{t' = s'+1}^{t}  \left(  1 + \frac{pk}{2t' - 1} \right) 
\nonumber
\end{aligned}
\end{equation}%
Use $1+x \leq e^x$ to write the product as a sum, and bound
\begin{equation}%
\begin{aligned}
\sum_{t' = s'+1}^t \frac{1}{t'-1/2} \leq \int_{x = s'}^t \frac{1}{x-1/2} \, \, \text{dx} = \log \frac{t -1/2}{s'-1/2}. \hspace{0.1in} \text{\normalsize Then}
\nonumber
\end{aligned}
\end{equation}%
\begin{equation}%
\begin{aligned}
\mathbb{E}(Z_t^{(k)}) &\leq (2m)^{(k)} \left( \frac{t -1/2}{s' -1/2} \right)^{pk/2} = (2m)^{(k)} \left(  \frac{t}{s'} \right)^{pk/2}  \left( \frac{2 - 1/t}{2 - 1/s'} \right)^{pk/2} \\
&\leq (2m)^{(k)}  \left(  \frac{t}{s'} \right)^{pk/2} 2^{pk/2}. \hspace{1.25in} 
\nonumber
\end{aligned}
\end{equation}%
\end{proof}%

Now define a \emph{supernode} to be a collection of nodes viewed as one. The degree of a supernode is the sum of the degrees of the vertices in the supernode.
\begin{lemma}
	\label{Lemma2}
	Let $S = (S_1, S_2, \ldots, S_l)$ be a disjoint collection of supernodes at time $t_0$. Assume that the degree of $S_i$ at time $t_0$ is $d_{t_0} (S_i) = d_i$. Let $t$ be a time later than $t_0$. Let $p_S(\vr;\vd,t_0,t)$ be the probability that each supernode $S_i$ has degree $r_i + d_i$ at time $t$. Let $d = \sum_{i = 1}^l d_i, r = \sum_{i = 1}^{l} r_i$. If $d = o(t^{1/2})$ and $r = o(t^{2/3})$, then
		\begin{equation}%
		\begin{aligned}
		p_S( \vr ; \vd , t_0, t) \leq \left( \prod_{i = 1}^l {r_i + d_i-1 \choose d_i -1} \right) \left( \frac{t_0 + 1}{t} \right)^{pd/2}  \text{exp} \left\{2 + t_0 - \frac{pd}{2} + \frac{3pr}{t^{p/2}} \right\} \nonumber
		\end{aligned}
		\end{equation}
\end{lemma}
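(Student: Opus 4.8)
The plan is to write $p_S(\vr;\vd,t_0,t)$ exactly as a sum over all \emph{schedules} of the $r=\sum_i r_i$ edge-gain events across the $t-t_0$ time steps, and then bound that sum. At each step $t'$ the graph has the \emph{deterministic} total degree $2t'-1$, so conditioned on the history supernode $S_i$ receives the new edge with probability $p\,\frac{d_{t'-1}(S_i)}{2t'-1}$ and otherwise no tracked supernode grows; crucially the normalization does not depend on the random degrees. Fix a schedule, i.e. the times $\tau_{i,1}<\dots<\tau_{i,r_i}$ at which $S_i$ gains its edges. Its probability factors into \emph{gain} terms $p\,\frac{d_i+j-1}{2\tau_{i,j}-1}$ and \emph{no-gain} terms $\bigl(1-p\,\frac{D(t')}{2t'-1}\bigr)$, where $D(t')=d+\#\{\text{gains strictly before }t'\}$ is the running total supernode degree. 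Since the gain numerators multiply to $\prod_i d_i(d_i+1)\cdots(d_i+r_i-1)=\prod_i (d_i)^{(r_i)}$ independently of the schedule, and since for a fixed set of gain \emph{times} every assignment of those times to the supernodes (there are $r!/\prod_i r_i!$ of them) yields the identical value, the multinomial multiplicity combines with the rising factorials to produce exactly $r!\prod_i\binom{r_i+d_i-1}{d_i-1}$.

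The heart of the argument is to bound the remaining sum over ordered gain times $t_0<\tau_1<\dots<\tau_r\le t$ of the quantity $p^r\prod_{j}\frac{1}{2\tau_j-1}\prod_{t'}\bigl(1-p\,\frac{D(t')}{2t'-1}\bigr)$. The key observation is that the no-gain factors must be tracked with the \emph{increasing} running degree $D(t')$, which steps up from $d$ to $d+r$ as the gains accumulate. Splitting the no-gain steps into the $r+1$ inter-gain segments on which $D$ is constant, applying $1-x\le e^{-x}$ on each segment, and comparing the harmonic sums to integrals exactly as in the proof of Lemma~\ref{Lemma1} shows that the no-gain product is of order $t_0^{pd/2}\,t^{-p(d+r)/2}\prod_{j}\tau_j^{p/2}$. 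Combining this with the gain denominators $\prod_j 1/(2\tau_j-1)$ gives each gain time $\tau_j$ the weight $\tfrac12\tau_j^{p/2-1}$, and summing over ordered tuples via $\sum_{\tau_1<\dots<\tau_r}\prod_j a_{\tau_j}\le \tfrac{1}{r!}\bigl(\sum_\tau a_\tau\bigr)^r$ with $\sum_\tau \tfrac12\tau^{p/2-1}\approx t^{p/2}/p$ produces a factor $\tfrac{1}{r!}\,(t^{p/2}/p)^r$. This $1/r!$ cancels the multinomial $r!$, the $p^{-r}$ cancels the $p^r$, and the power $t^{pr/2}$ cancels the $t^{-pr/2}$ from the no-gain product, leaving precisely $\prod_i\binom{r_i+d_i-1}{d_i-1}\,(t_0/t)^{pd/2}$.

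Finally I would collect the error terms (replacing $2t'-1$ by $2t'$, the sum-versus-integral gaps, the segment-boundary contributions, and the overcounting in $\sum\prod\le\frac{1}{r!}(\sum)^r$) into the residual factor $\exp\{2+t_0-\tfrac{pd}{2}+\tfrac{3pr}{t^{p/2}}\}$, where the hypotheses $d=o(t^{1/2})$ and $r=o(t^{2/3})$ are exactly what keep these corrections lower order (in particular controlling the $r$-fold product of the per-step corrections). I expect the main obstacle to be precisely this interplay between the growing running degree $D(t')$ in the no-gain factors and the gain denominators: taken naively the sum of reciprocals $1/(2\tau_j-1)$ would suggest a spurious $(\log(t/t_0))^r$ blow-up, and it is only the extra decay supplied by $D$ climbing from $d$ to $d+r$ that converts this into canceling powers of $t$. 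Making that cancellation rigorous while bounding the $r$-fold error uniformly for all $r=o(t^{2/3})$ is the delicate step; the combinatorial collapse to the binomial coefficients, by contrast, is exact and follows from the P\'olya-urn structure of the attachment dynamics. Alternatively, the whole bound can be established by induction on $t$ from the one-step recurrence $p_S(\vr;\vd,t_0,t)=\bigl(1-p\tfrac{d+r}{2t-1}\bigr)p_S(\vr;\vd,t_0,t-1)+\sum_{i:r_i\ge1}p\tfrac{d_i+r_i-1}{2t-1}\,p_S(\vr-\mathbf{e}_i;\vd,t_0,t-1)$, checking the claimed form against it with Pascal's identity for the binomials.
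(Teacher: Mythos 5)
Your proposal follows essentially the same route as the paper's proof: decompose $p_S$ over schedules of gain times, collapse the gain numerators and the multinomial multiplicity into $r!\prod_i\binom{r_i+d_i-1}{d_i-1}$, bound the telescoping no-gain product by integral comparison (yielding the $t_0^{pd/2}t^{-p(d+r)/2}\prod_j\tau_j^{p/2}$ structure), and then sum over ordered gain times via $\sum_{\tau_1<\dots<\tau_r}\prod_j a_{\tau_j}\le\frac{1}{r!}\bigl(\sum_\tau a_\tau\bigr)^r$ so that the $r!$, $p^{\pm r}$, and $t^{\pm pr/2}$ factors cancel, with $d=o(t^{1/2})$ and $r=o(t^{2/3})$ absorbing the corrections into the exponential error term. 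This matches the paper's argument step for step (the paper just carries out the integral bounds more explicitly through its $A$ and $B_k$ terms), so the plan is sound and faithful to the original.
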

\begin{proof}
	Let $\tau^{(i)} = (\tau_1^{(i)}, \ldots, \tau_{r_i}^{(i)})$, where $\tau_j^{(i)}$ is the time when we add an edge incident to $S_i$ and increase the degree from $d_i + j -1$ to $d_i + j$. Define $\tau = (\tau_0, \tau_1, \ldots, \tau_{r+1})$ to be the ordered union of $\tau^{(i)}$, with $\tau_0 = t_0$ and $\tau_{r+1} = t$. Let $p(\tau; \vd, t_0, t)$ be the probability that $S_i$ increases in degree at exactly the times specified by $\tau$. 
	\begin{fullwidth}
	\begin{equation}%
	\label{eqn:first}
	\begin{aligned}
	p(\tau; \vd, t_0, t)
	&= \left( \prod_{i = 1}^l \prod_{k=1}^{r_i} p \frac{d_i + k - 1}{2\tau_{k}^{(i)} -1} \right)  
	\left( \prod_{k=0}^r \prod_{j = \tau_k + 1}^{\tau_{k+1}-1} \left(1 - p \frac{d+k}{2j-1} \right) \right) \\
	 &= \left( \prod_{i=1}^l \frac{(r_i + d_i-1)!}{(d_i-1)!} \right) \left( \prod_{k=1}^r \frac{p}{2 \tau_k - 1} \right)  
	 \text{exp} \left\{ \sum_{k = 0}^r \sum_{j = \tau_k +1}^{\tau_{k+1} - 1} \log \left( 1 - p \left( \frac{d+k}{2j-1} \right) \right) \right\}
	 \nonumber
	\end{aligned}
	\end{equation}%
	\end{fullwidth}
	Now we can bound the inner most sum of the exponential term.
\begin{equation}%
\sum_{j = \tau_k + 1}^{\tau_{k+1} - 1} \log \left( 1 - p \left(\frac{d+k}{2j-1} \right) \right) \leq \sum_{j = \tau_{k} + 1}^{\tau_{k+1}-1} \log \left( 1 - \frac{p(d+k)}{2j} \right)
\nonumber
\end{equation}%
which is less than or equal to
\begin{equation}%
\begin{aligned}
\int_{\tau_k + 1}^{\tau_{k+1}} \log \left( 1 - \frac{p(d+k)}{2x} \right) \, \, \text{dx} 
	 & = - \tau_{k+1} \log(2 \tau_{k+1}) + (\tau_k + 1)\log(2 \tau_k + 2) \\
	& -1/2(2 \tau_{k+1} - p(d+k))\log(2 \tau_{k+1} - p(d+k))  \\
	&- 1/2(2 \tau_k + 2 - p(d+k)) \log(2 \tau_k + 2 - p(d+k)).
	\nonumber
\end{aligned}
\end{equation}%
	Note that $\tau_0 = t_0$ and $\tau_{r+1} = t$. We can write
	\begin{equation}%
	\sum_{k =0}^r \int_{\tau_k + 1}^{\tau_{k+1}} \log \left( 1 - \frac{p(d+k)}{2x} \right) \, \, \text{dx} = A + \sum_{k = 1}^{r} B_k
	\nonumber
	 \end{equation}%
	 where
	 \begin{equation}%
	 \label{eqn:A}
	 \begin{aligned}
	 A &= (t_0 + 1)\log(2t_0 + 2) -1/2(2 t_0 + 2 - pd)\log(2t_0 + 2 - pd) \\ & - t \log(2t) + 1/2 (2t - p(d+r)) \log(2t - p(d+r))
	 \end{aligned}
	 \end{equation}%
	 and
	 \begin{equation}%
	 \label{eqn:Bk}
	 \begin{aligned}
	 B_k &= \tau_k \log(1 + 1/\tau_k) + \log(2 \tau_k + 2) - \frac{2-p}{2} \log(2\tau_k + p - p(d+k))
	 \\&+ 1/2(2 \tau_k + 2 - p(d+k)) \log\left(1 - \frac{2-p}{2\tau_k + 2 -p(d+k)}\right).
	 \end{aligned}
	 \end{equation}%
	 Bound each of $A$ and $B_k$, starting with $B_k$. Since $1+x \leq e^x$, $\tau_k \log(1 + 1/\tau_k) \leq 1$, and $ \frac{1}{2}(2 \tau_k + 2 - p(d+k)) \log\left(1 - \frac{2-p}{2\tau_k + 2 -p(d+k)}\right) \leq -p/2$. Rearranging the other two terms of Equation \eqref{eqn:Bk} we get
	 \begin{equation}%
	 B_k \leq \frac{p}{2} \log(2 \tau_k + 2) - \frac{2-p}{2} \log\left(1 - \frac{p(d+k) + 2-p}{2 \tau_k + 2} \right) + \frac{p}{2}.
	 \nonumber
	 \end{equation}%
Rearranging terms of $A$ from Equation~\eqref{eqn:A} and taking the exponential, 
	 \begin{equation}%
	 \begin{aligned}
	 e^A &= \left( 1 - \frac{pd}{2t_0 + 2} \right)^{-(t_0 + 1)} (2t_0 + 2 - pd)^{pd/2} \left(1 - \frac{p(d+r)}{2t} \right)^t
	  (2t - p(d+r))^{\frac{-p(d+r)}{2}} \\
	 & = \left( 1 - \frac{pd}{2t_0 +2} \right)^{-(1-\frac{pd}{2(t_0+1)})(t_0+1)} \left( 1 - \frac{p(d+r)}{2t} \right)^{t - \frac{p(d+r)}{2}}  
 \left( \frac{t_0 +1}{t} \right)^{\frac{pd}{2}} (2t)^{\frac{-pr}{2}}.
 \nonumber
	 \end{aligned}
	 \end{equation}%
	 Using the bound $1 - x \leq e^{-x - x^2/2}$ for $0 < x < 1$,
	 \begin{equation}%
	 \begin{aligned}
	 \left( 1 - \frac{p(d+r)}{2t} \right)^{t - p(d+r)/2} \leq \text{exp} \left\{ -\frac{p(d+r)}{2} + \frac{p^2(d+r)^2}{8t} + \frac{p^3(d+r)^3}{16t^2}  \right\}
	 \nonumber
	 \end{aligned}
	 \end{equation}%
	 Putting the bounds on $A$ and $B_k$ together, we get
	 \begin{fullwidth}
	 \begin{equation}%
	 \label{eqn:eAB}
	 \begin{aligned}
	 e^{A + \sum B_k} &\leq \left( 1 - \frac{pd}{2t_0 +2} \right)^{-(1-\frac{pd}{2(t_0+1)})(t_0 + 1)}
	  \text{exp} \left\{ -\frac{p(d+r)}{2} + \frac{p^2(d+r)^2}{8t} + \frac{p^3(d+r)^3}{16t^2}  \right\} 
	 \\& \times \left( \frac{t_0 +1}{t} \right)^{pd/2} (2t)^{-pr/2}   \prod_{k=1}^r \left( \left(1 - \frac{p(d+k) + 2-p}{2 \tau_k +2} \right)^{-(2-p)/2} (2 \tau_k + 2)^{p/2} \right) e^{pr/2}.
	 \end{aligned}
	 \end{equation}%
	 \end{fullwidth}
	Define err$(r,d,t_0,t)$
	\begin{fullwidth}
	\begin{equation}%
	\begin{aligned}%
	\text{err}(r,d,t_0,t) &= \left( 1 - \frac{pd}{2t_0 +2} \right)^{-(1-\frac{pd}{2(t_0+1)})(t_0+1)} 
	\text{exp} \left\{ -\frac{pd}{2} + \frac{p^2(d+r)^2}{8t} + \frac{p^3(d+r)^3}{16t^2}  \right\},
	\nonumber
	\end{aligned}%
	\end{equation}%
	\end{fullwidth}
Then we can re-write Equation~\eqref{eqn:eAB} as
	\begin{equation}
	 \text{err}(r,d,t_0,t) \left( \frac{t_0+1}{t} \right)^{\frac{pd}{2}} (2t)^{\frac{-pr}{2}} 
	     \prod_{k=1}^r \left( \left(1 - \frac{p(d+k) + 2-p}{2 \tau_k +2} \right)^{\frac{-(2-p)}{2}} (2 \tau_k + 2)^{\frac{p}{2}} \right).
	     \nonumber
 	\end{equation}%


So we finally finish with the bound on $p(\tau; \vd, t_0,t)$ by substituting Equation~\eqref{eqn:eAB} into Equation~\eqref{eqn:first} and rearranging terms:
	\begin{equation}%
	\begin{aligned}
p(\tau; \vd, t_0,t)	&= \left( \prod_{i=1}^l \frac{(r_i + d_i -1)!}{(d_i-1)!} \right) \, \text{err}(r,d,t_0,t)\left( \frac{t_0 + 1}{t} \right)^{pd/2} (2t)^{-pr/2} \\ 
	& \times  \prod_{k=1}^r \left( p(2\tau_k +p - p(d+k))^{-(2-p)/2} \left(1 + \frac{3}{2 \tau_k -1} \right) \right).
	\nonumber
	\end{aligned}
	\end{equation}%
Now, we will sum $p(\tau; \vd, t_0,t)$ over all ordered choices of $\tau$.
 	\begin{equation}%
 	\begin{aligned}
 &	p(\vr; \vd, t_0, t) \leq \sum_{\tau^{(1)}, \ldots, \tau^{(l)}} p(\tau; \vd, t_0,t) 
 \nonumber
 \end{aligned}
 \end{equation}%
	\begin{equation}%
\begin{aligned}
 		&\leq {r \choose r_1, \ldots, r_l} \sum_{t_0 + 1 \leq \tau_1 < \ldots < \tau_r \leq t}  \prod_{i=1}^l \frac{(r_i + d_i -1)!}{(d_i-1)!} 
 		\text{err}(r,d,t_0,t) \left( \frac{t_0 + 1}{t} \right)^{\frac{pd}{2}} \\
 		& \hspace{0.5in} \times  (2t)^{\frac{-pr}{2}}p \prod_{k=1}^r (2\tau_k +p - p(d+k))^{-(2-p)/2} \left(1 + \frac{3}{2 \tau_k -1} \right)  \nonumber
 		\end{aligned}
 		\end{equation}%
\begin{equation}%
\begin{aligned}
 	\label{eqn:almostfinal}
 		&= r! \left( \prod_{i=1}^l {r_i + d_i - 1 \choose d_i - 1} \right) \, \text{err}(r,d,t_0,t) \left(\frac{t_0 + 1}{t}\right)^{pd/2} (2t)^{-pr/2} \\
 		&\hspace{0.2in} \times \sum_{t_0 + 1 \leq \tau_1 < \ldots < \tau_r \leq t} p \prod_{k=1}^r (2 \tau_k + p - p(d+k))^{-(2-p)/2} \left(1 + \frac{3}{2 \tau_k - 1} \right) 
 	\end{aligned}
 	\end{equation}%
Now let $\tau_k' = \tau_k - \lceil p(d+k)/2 \rceil$. Since $d \geq 1$ and $k \geq 1$, we have $2 \lceil p(d+k)/2 \rceil \geq 2$. So, Equation~\eqref{eqn:almostfinal} is less than or equal to
\begin{fullwidth}
 	\begin{equation}%
 	\begin{aligned}
 	\sum_{ (t_0 - p \lceil d/2 \rceil + 1) \leq \tau_1' \leq \ldots \leq \tau_r' \leq (t - p \lceil (d+r)/2 \rceil ) } &\left( p \prod_{k = 1}^r (2 \tau_k' + p)^{-(2-p)/2} \left( 1 + \frac{3}{2\tau_k' +1} \right) \right)  \\
 	&\leq \frac{p}{r!} \left( \sum_{ \tau' = (t_0 - p \lceil d/2 \rceil + 1)}^{t - p \lceil (d+r)/2 \rceil} \left( 2 \tau' + p)^{-(2-p)/2} + 3(2\tau' + 1)^{-(4-p)/2} \right) \right)^r \\
 	&= \frac{1}{r!} (2t)^{pr/2} \underbrace{\left(1 - \frac{p(d+r) - p}{2t}  \right)^{pr/2}}_{\leq \text{exp}\left\{-\frac{rp(p(d+r) -p)}{4t} \right\} } \underbrace{ \left( 1 + \frac{3p}{(2t - p(d+r) +p)^{p/2}} \right)^r}_{\leq \text{exp} \left\{  \frac{3pr}{ (2t - p(d+r) + p)^{p/2} }   \right\}  } .
 	\nonumber
 	\end{aligned}
 	\end{equation}%
 	\end{fullwidth}
 	Where the last inequalities come from $1 + x \leq e^x$. So finally, 
 	\begin{fullwidth}
 	\begin{equation*}%
 	\begin{aligned}
 	p_S(\vr; \vd, t_0, t) &\leq \left( \prod_{i = 1}^l {r_i + d_i - 1 \choose d_i - 1} \right) \text{err}(r,d,t_0,t) \left( \frac{t_0 + 1}{t} \right)^{pd/2} 
 	 \text{exp} \left \{  \frac{-rp((d+r) -p)}{4t} + \frac{3pr}{(2t - p(d+r) + p)^{p/2}}  \right\}.
 	\nonumber
 	\end{aligned}
 	\end{equation*}%
 	\end{fullwidth}
Using $d = o(t^{1/2})$ and $r = o(t^{2/3})$ gives the final bound, and this concludes the proof.
\end{proof}

\begin{theorem}
	\label{Theorem1}
	Let $m$, $k$ be fixed positive integers, and let $f(t)$ be a function with $f(t) \rightarrow \infty$ as $t \rightarrow \infty$. Let $\Delta_1 \geq \Delta_2 \geq \ldots \geq \Delta_k$ denote the degrees of the $k$ highest degree vertices of $G_t^m$. Then
		\begin{equation}%
	 \frac{t^{p/2}}{f(t)} \leq \Delta_1 \leq t^{p/2}f(t) \, \, \, \text{and} \, \, \,
	 \frac{t^{p/2}}{f(t)} \leq \Delta_i \leq \Delta_{i-1} - t^{p/2}f(t)  \nonumber
	 \end{equation}%
	for $i = 1,2,\ldots,k$ whp.
\end{theorem}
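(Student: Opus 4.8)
The plan is to follow the template of \citet{flaxman2005high}, splitting the claim into three pieces: an upper envelope that simultaneously caps every degree, a lower envelope for the top $k$ supernodes, and a separation (gap) statement between consecutive top degrees. The two tools already in hand are exactly the inputs I expect to need: the factorial-moment bound of Lemma~\ref{Lemma1} drives the upper envelope, while the fine-grained degree-probability estimate of Lemma~\ref{Lemma2} drives both the lower envelope and the gaps.

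First I would establish $\Delta_1 \le t^{p/2} f(t)$. Since the moment bound of Lemma~\ref{Lemma1} is largest for the earliest-arriving supernodes (smallest $s'$), I would apply Markov's inequality to the $K$-th factorial moment, for a \emph{fixed but large} integer $K$ chosen so that $pK/2 > 1$ (legal because Lemma~\ref{Lemma1} holds for every positive integer exponent, independently of the target count $k$). For a single supernode $s$ this gives $\Pr[d_t(s) \ge t^{p/2} f(t)] \le (2m)^{(K)} 2^{pK/2} (t/s')^{pK/2} / (t^{p/2} f(t))^{(K)}$, which is of order $(s')^{-pK/2} f(t)^{-K}$. A union bound over all $\Theta(t)$ supernodes then produces the convergent sum $\sum_{s'} (s')^{-pK/2}$ (convergent precisely because $pK/2 > 1$), so the total failure probability is $O(f(t)^{-K}) \to 0$. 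This bounds every $\Delta_i$ from above at once.

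Next, for the lower envelope $\Delta_i \ge t^{p/2}/f(t)$, I would show that whp at least $k$ of the earliest supernodes reach degree $t^{p/2}/f(t)$. Here Lemma~\ref{Lemma2} is used as a CDF bound: summing $p_S(r;d,t_0,t)$ over the small values with $r+d < t^{p/2}/f(t)$ upper-bounds the probability that a given early supernode stays below the threshold, and since $t^{p/2}/f(t)$ is far below the typical scale $t^{p/2}$, this mass is strictly less than $1$, so each early supernode clears the threshold with probability close to $1$. Because the degrees of distinct supernodes are not independent, I would then invoke the multi-supernode form of Lemma~\ref{Lemma2} to bound the joint probability that some set of all-but-$(k-1)$ early supernodes simultaneously fail, concluding that at least $k$ cross the threshold whp.

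Finally --- and this is where I expect the real work to be --- I would prove the separation, namely that consecutive top degrees differ by at least order $t^{p/2}/f(t)$, by showing that whp no two high-degree supernodes have degrees lying in a common band of width $\sim t^{p/2}/f(t)$. This is exactly the situation Lemma~\ref{Lemma2} was built for with $l=2$: the product of binomial factors $\binom{r_i+d_i-1}{d_i-1}$ together with the $((t_0+1)/t)^{pd/2}$ decay behaves like the joint density of two rescaled degrees, so the anti-concentration it encodes should make the sum of the two-supernode bound over all pairs of targets in a narrow band come out to $o(1)$. The delicate points are keeping the $\mathrm{err}$ factor controlled across the summation, checking that the hypotheses $d = o(t^{1/2})$ and $r = o(t^{2/3})$ hold over the relevant ranges, and ensuring the union bound over the $\binom{\Theta(t)}{2}$ pairs does not swamp the per-pair smallness; this gap argument, rather than the two envelope bounds, is the main obstacle.
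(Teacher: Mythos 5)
Your three-part decomposition (upper envelope from Lemma~\ref{Lemma1}, lower envelope and gaps from Lemma~\ref{Lemma2}) matches the paper's strategy, and your two envelope arguments are essentially sound. The upper bound via Markov on a fixed $K$-th factorial moment with $pK/2>1$, giving the convergent sum $\sum_{s'}(s')^{-pK/2}$ and total failure probability $O(f(t)^{-K})$, is exactly the technique of the paper's Claims~\ref{Claim2} and~\ref{Claim3} (there applied with intermediate thresholds rather than $t^{p/2}f(t)$ directly). Your lower bound takes a legitimately different route: the paper aggregates \emph{all} vertices born before $t_0=\log\log\log f(t)$ into a single supernode, shows its total degree is at least $t_0^{1/3}t^{p/2}$ (Claim~\ref{Claim1}), and then extracts $k$ individually large vertices by a counting argument (Claim~\ref{Claim4}); you instead apply Lemma~\ref{Lemma2} with $l=1$ to each early vertex separately, which works, and your concern about dependence is moot --- a union bound over the fixed number $k$ of early vertices requires no independence whatsoever.

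The genuine gap is the separation claim, which you yourself flag as ``the main obstacle'' but leave unresolved, and the resolution is not a more careful handling of the $\mathrm{err}$ factor: it is a missing structural idea. A union bound over $\binom{\Theta(t)}{2}$ pairs cannot succeed, because the per-pair probability of two degrees landing within $t^{p/2}/f(t)$ of each other is of order $(\text{band width})\times t^{-p/2}\times Q(t)$, where $Q(t)$ grows slower than any positive power of $f(t)$; this is $\approx Q(t)/f(t)$, and since $f(t)$ may tend to infinity arbitrarily slowly (e.g.\ $f(t)=\log t$), it cannot absorb even one factor of $t$, let alone $t^2$. What the paper actually does (Claims~\ref{Claim2}--\ref{Claim4}) is first \emph{localize} the top-$k$ vertices in time: with $t_0=\log\log\log f(t)$ and $t_1=\log\log f(t)$, it shows whp that no vertex born after $t_1$ exceeds $t_0^{-2}t^{p/2}$, no vertex born before $t_1$ exceeds $t_0^{1/6}t^{p/2}$, and the top $k$ degrees all exceed $t_0^{-1}t^{p/2}$, so every candidate for the top $k$ arrives before $t_1$. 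Only then is the gap union bound (Claim~\ref{Claim5}) taken, over the $\binom{t_1}{2}=O\bigl((\log\log f(t))^2\bigr)$ pairs of \emph{deterministic early arrival times}, against which the per-pair bound $Q(t)/f(t)$ does vanish. Note that this localization cannot be replaced by your random event ``both vertices are high-degree'': Lemma~\ref{Lemma2} is indexed by arrival times, so a union bound needs a small deterministic candidate set, and producing one requires exactly the intermediate-threshold claims your proposal omits.
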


The factor of $t^{p/2}$ in Theorem \ref{Theorem1} implies a power-law distribution in the largest degrees with exponent $\beta = (2 + p)/p$. This can be seen by using a martingale argument, as described in~\citet{van2016random} for instance. Notice that depending on the value chosen for $p$, we can obtain a power-law fit with exponents ranging between $3$ and $\infty$. 

\begin{proof}
Partition the vertices into those added before time $t_0$, before time $t_1$, and after $t_1$, with $t_0 = \log \log \log f(t), t_1 = \log \log f(t).$ We will argue about the maximum degree vertices in each set.	
	
\begin{claim}%
		\label{Claim1}
		In $G_t^m$, the degree of the supernode of vertices added before time $t_0$ is at least $t_0^{1/3} t^{p/2}$ whp.
\end{claim}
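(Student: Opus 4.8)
The plan is to realize $S$ as a single supernode and apply Lemma~\ref{Lemma2} with $l=1$ to control the lower tail of its degree at time $t$. First I would record that, because every time step adds exactly one edge and every vertex present at time $t_0$ lies in $S$, the degree $d := d_{t_0}(S) = \sum_{v\in S} d_{t_0}(v) = 2e_{t_0} = 2(e_0+t_0)$ is \emph{deterministic} and satisfies $d = \Theta(t_0)$; in particular $t_0 \le d = o(t^{1/2})$, so the hypothesis of Lemma~\ref{Lemma2} on $d$ is met.

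Next I would write the bad event as a sum over exact degree increments. Setting $R = \lceil t_0^{1/3}t^{p/2}\rceil - d$, we have
\[
\PR\bigl(d_t(S) < t_0^{1/3}t^{p/2}\bigr) = \sum_{r=0}^{R-1} p_S(r; d, t_0, t),
\]
and for $p \le 1$ each index satisfies $r < t_0^{1/3}t^{p/2} = o(t^{2/3})$, so Lemma~\ref{Lemma2} applies to every summand, giving (with $l=1$)
\[
p_S(r; d, t_0, t) \le \binom{r+d-1}{d-1}\Bigl(\tfrac{t_0+1}{t}\Bigr)^{pd/2}\exp\Bigl\{2 + t_0 - \tfrac{pd}{2} + \tfrac{3pr}{t^{p/2}}\Bigr\}.
\]

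The core of the argument is then a one-sided tail estimate. Since both $\binom{r+d-1}{d-1}$ and $e^{3pr/t^{p/2}}$ are increasing in $r$ while the remaining factors are independent of $r$, the summand is monotone in $r$, so $\PR(d_t(S) < t_0^{1/3}t^{p/2}) \le R\cdot p_S(R; d, t_0, t)$. I would then estimate the single term at $r=R$ using Stirling on $\binom{r+d-1}{d-1} \le (r+d)^{d-1}/(d-1)!$. The decisive bookkeeping is that the powers of $t$ collapse: $R^{d-1}$ contributes $t^{(p/2)(d-1)}$ and $\bigl((t_0+1)/t\bigr)^{pd/2}$ contributes $t^{-pd/2}$, leaving exactly $t^{-p/2}$, which cancels against the factor $R \sim t^{p/2}$ counting the terms. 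What survives is purely a function of $t_0$: after Stirling, the binomial over $(d-1)!$ together with the $t_0$-powers of $R^{d-1}$ and the factor $e^{t_0 - pd/2}$ combine to $t_0^{\Theta(t_0)(p/2 - 2/3)}$ times factors that are only exponential in $t_0$ (or sub-exponential, from $e^{3pt_0^{1/3}}$). Because $p \le 1 < 4/3$ we have $p/2 - 2/3 < 0$, so this quantity decays faster than any power of $t_0$ and hence $\to 0$ as $t \to \infty$, establishing the claim whp.

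The main obstacle I anticipate is precisely this competition of factors: one must track that the $t$-exponents cancel cleanly to $t^{-p/2}$ (so the $\sim t^{p/2}$ number of summands is absorbed) and that the leftover $t_0$-dependence is genuinely super-polynomially small. This is where the constraint $p < 4/3$ (automatic here since $p \le 1$) enters, and where care is needed in using $d = \Theta(t_0)$ rather than an exact value, since $d$ sits in the exponents; I would verify that the sign of $p/2 - 2/3$, and hence the decay, is insensitive to the exact constant relating $d$ and $t_0$.
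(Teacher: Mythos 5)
Your proposal is correct and takes essentially the same route as the paper's proof: both apply Lemma~\ref{Lemma2} with $l=1$ and $d=2t_0$, bound the lower-tail sum by the number of terms times the maximal summand, control the binomial via a Stirling-type bound $\binom{N}{k}\le N^k/k!$, note that the powers of $t$ cancel exactly, and conclude from the surviving $t_0^{-\Theta(t_0)(4/3-p)}$-type decay (valid since $p\le 1<4/3$) that the probability vanishes as $t\to\infty$. The paper's final bound, $e^{1+(3-p)t_0+3pt_0^{1/3}-6pt_0/t^{p/2}}/(2t_0-1)^{t_0(4/3-p)-1}$, is precisely the quantity your bookkeeping describes.
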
%

	\begin{proof}
		Consider all vertices added before time $t_0$ as a supernode. Let $A_1$ denote the event that this supernode has degree less than $t_0^{1/3}t_{}^{p/2}$ at time $t$. We will use Lemma \ref{Lemma2} with $l = 1$, and $d = 2t_0$. 
		\begin{equation}%
		\begin{aligned}
		Pr[A_1] &\leq \sum_{r_1 = 0}^{t_0^{1/3}t^{p/2}-2t_0} {r_1 + 2t_0 - 1 \choose 2t_0 - 1} \left( \frac{t_0 + 1}{t} \right)^{pd/2} e^{2 + t_0 - pd/2 + 3pr/t^{p/2}} \\
		& \leq \sum_{r_1 = 0}^{t_0^{1/3}t^{p/2} - 2t_0}   \underset{\parbox{55pt}{\scriptsize by  $r_1 = t_0^{1/3}t^{p/2}$ }}{t_0^{1/3}t^{p/2} - 1 \choose 2t_0 - 1} \left( \frac{t_0 + 1}{t} \right)^{pt_0}       
		 \underset{\parbox{75pt}{ \scriptsize by $r = r_1$ and $d = 2t_0$}}{e^{2 + t_0 - pt_0 + 3pt_0^{1/3} - 6pt_0/t^{p/2}}} \nonumber 
		 \end{aligned}
		 \end{equation}%
		
		\begin{equation}%
		\begin{aligned}
		&= (t_0^{1/3}t^{p/2} - 2t_0) \frac{ (t_0^{1/3} t^{p/2} - 1)!  }{  (2t_0-1)! (t_0^{1/3} t^{p/2} - 2t_0)!   }  \left( \frac{t_0 + 1}{t} \right)^{pt_0} 
		 e^{2 + t_0(1-p) + 3pt_0^{1/3} - \frac{6pt_0}{t^{p/2}}} \\
		& \leq t_0^{1/3} t^{p/2} \frac{  (t_0^{1/3}t^{p/2})^{2t_0 - 1}}{(2t_0 - 1)!} \left( \frac{t_0 + 1}{t} \right)^{pt_0} 
	 e^{2 + t_0(1-p) + 3pt_0^{1/3} - 6pt_0/t^{p/2}} \\
		\nonumber
		\end{aligned}
		\end{equation}%
Using $1/x \leq e^x/x^x$ and rearranging terms, $P[A_1]$ goes to 0:
\begin{equation}
Pr[A_1]	\leq \frac{   e^{1 + (3-p)t_0 + 3pt_0^{1/3} - 6pt_0/t^{p/2}} }{(2t_0 - 1)^{t_0(4/3-p) - 1}}.
\nonumber
		\end{equation}%
		\end{proof}

	\begin{claim}
		\label{Claim2}
		In $G_t^m$, no vertex added after time $t_1$ has degree exceeding $t_0^{-2} t^{p/2}$ whp.
	\end{claim}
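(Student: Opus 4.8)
The plan is to bound the probability via a high-moment Markov inequality, mirroring the corresponding step in~\citet{flaxman2005high}, and then to union bound over all late-arriving vertices. The crucial insight is that summing the individual tail bounds over arrival times produces a \emph{convergent} series rather than an uncontrolled factor of $t$, provided the moment order is chosen large enough.

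First I would fix a positive integer $\ell$ with $p\ell/2 > 1$ (possible since $p$ is a fixed positive constant); note that this $\ell$ is a new parameter, unrelated to the $k$ counting high-degree vertices in Theorem~\ref{Theorem1}. For a vertex $s$ with arrival time $s' > t_1$, Lemma~\ref{Lemma1} gives $\mathbb{E}[(d_t(s))^{(\ell)}] \leq (2m)^{(\ell)} 2^{p\ell/2}(t/s')^{p\ell/2}$. Since the rising factorial is increasing and $d^{(\ell)} \geq d^{\ell}$, Markov's inequality yields
\[ \Pr[d_t(s) \geq t_0^{-2} t^{p/2}] \leq \frac{\mathbb{E}[(d_t(s))^{(\ell)}]}{(t_0^{-2}t^{p/2})^{\ell}} \leq (2m)^{(\ell)} 2^{p\ell/2}\, t_0^{2\ell}\, (s')^{-p\ell/2}, \]
where the factors of $t^{p\ell/2}$ cancel exactly.

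Next I would union bound over every vertex added after time $t_1$. Because each time step adds at most two vertices, summing the displayed bound over arrival times $s' = t_1+1, \ldots, t$ gives, writing $A_2$ for the bad event,
\[ \Pr[A_2] \leq 2 (2m)^{(\ell)} 2^{p\ell/2}\, t_0^{2\ell} \sum_{s' = t_1 + 1}^{t} (s')^{-p\ell/2} \leq C_{\ell,m,p}\, t_0^{2\ell}\, t_1^{1 - p\ell/2}, \]
where I bounded the sum by $\int_{t_1}^\infty x^{-p\ell/2}\,dx = t_1^{1-p\ell/2}/(p\ell/2 - 1)$, finite precisely because $p\ell/2 > 1$. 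Finally, since $t_0 = \log\log\log f(t)$ and $t_1 = \log\log f(t)$ satisfy $t_0 = \log t_1$, the bound reads $C_{\ell,m,p}(\log t_1)^{2\ell}\, t_1^{1-p\ell/2}$, and the polynomial decay $t_1^{1-p\ell/2} \to 0$ dominates the polylogarithmic growth $(\log t_1)^{2\ell}$, so $\Pr[A_2] \to 0$ as $t \to \infty$.

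I expect the main obstacle to be the union bound: a naive bound of $t$ times the per-vertex probability carries an uncontrolled factor of $t$ that would force $\ell$ to grow with $t$ and complicate the moment estimate. The resolution --- and the only delicate point --- is to keep the $s'$-dependence and sum the per-vertex tail bounds over arrival times as $\sum (s')^{-p\ell/2}$; choosing the \emph{fixed} moment order $\ell > 2/p$ makes this series converge and contributes only the vanishing factor $t_1^{1-p\ell/2}$, cleanly beating the polylog from $t_0^{2\ell}$.
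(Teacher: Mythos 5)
Your proposal is correct and follows essentially the same route as the paper: Markov's inequality on the $\ell$-th rising-factorial moment from Lemma~\ref{Lemma1} with $\ell > 2/p$, exact cancellation of the $t^{p\ell/2}$ factors, a union bound over arrival times after $t_1$, and an integral comparison yielding a bound of the form $C\, t_0^{2\ell} t_1^{1-p\ell/2} \to 0$. The only differences are cosmetic (the explicit factor of $2$ for two-vertices-per-step and integrating to $\infty$ rather than to $t$), neither of which changes the argument.
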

	\begin{proof}
	Let $A_2$ denote the event that some vertex added after time $t_1$ has degree exceeding $t_0^{-2} t^{p/2}$. 
	\begin{equation}%
	\begin{aligned}
	Pr[A_2] &\leq  \sum_{s = t_1}^{t} \text{Pr}[d_t(s) \geq t_0^{-2} t^{p/2}]
	= \sum_{s = t_1}^{t} \text{Pr}[(d_t(s))^{(l)} \geq (t_0^{-2} t^{p/2})^{(l)}] \\
	&\leq   \sum_{s = t_1}^t t_0^{2l} t^{-lp/2} \mathbb{E}[(d_t(s))^{(l)}] 
	 = \sum_{s = t_1}^{t}t_0^{2l} t^{-lp/2} (2m)^{(l)} 2^{lp/2} \left( \frac{t}{s} \right)^{lp/2}  \nonumber
	 \end{aligned}
	 \end{equation}%
\begin{equation}%
	\label{eqn:A21}
= 2^{lp/2} (2m)^{(l)} t_0^{2l} \int_{t_1-1}^{t} x^{-lp/2} \,\, \text{dx} 
	\end{equation}%
	We compute the integral in Equation~\eqref{eqn:A21}, 
	\begin{equation}%
	\label{eqn:A22}
	\begin{aligned}
\int_{t_1-1}^{t}& x^{-lp/2} \,\, \text{dx} = \left. \frac{x^{-lp/2 + 1}}{-lp/2 + 1} \right|_{t_1 -1}^{t} = (-lp/2 + 1)^{-1} \left( t^{-lp/2+1} - (t_1 - 1)^{-lp/2 + 1} \right) 
	\end{aligned}
	\end{equation}%
	Choose $l > 2/p$.
	Then the integral in Equation~\eqref{eqn:A22} is less than or equal to $(lp/2 - 1)^{-1} (t_1-1)^{-lp/2 +1}$, and plugging in the computation from Equation~\eqref{eqn:A22} into Equation~\eqref{eqn:A21},
	\begin{equation}
	\begin{aligned}
	Pr[A_2] \leq \frac{ 2^{lp/2} (2m)^{(l)} t_0^{2l} }{(lp/2-1)(t_1-1)^{lp/2 - 1}}
	\nonumber
	\end{aligned}
	\end{equation}
    which goes to 0 as $t$ increases.
	\end{proof}
	
	\begin{claim}
		\label{Claim3}
		In $G_t^m$, no vertex added before time $t_1$ has degree exceeding $t_0^{1/6} t^{p/2}$ whp.
	\end{claim}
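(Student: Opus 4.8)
The plan is to prove Claim~\ref{Claim3} by reusing the union-bound and rising-factorial-moment machinery of Claim~\ref{Claim2}, but with the roles of the threshold and the summation range reversed. Let $A_3$ be the event that some vertex added before time $t_1$ has degree exceeding $t_0^{1/6}t^{p/2}$. First I would union bound over every (super)vertex whose arrival time $s$ is at most $t_1$:
\[ Pr[A_3] \leq \sum_{s=1}^{t_1} Pr[d_t(s) \geq t_0^{1/6} t^{p/2}]. \]
As in Claim~\ref{Claim2}, I would pass to the $l$-th rising factorial moment, apply Markov's inequality, and invoke Lemma~\ref{Lemma1}. Using $(t_0^{1/6}t^{p/2})^{(l)} \geq t_0^{l/6}t^{lp/2}$, this yields
\[ Pr[d_t(s) \geq t_0^{1/6}t^{p/2}] \leq t_0^{-l/6}t^{-lp/2}\,\mathbb{E}[(d_t(s))^{(l)}] \leq t_0^{-l/6}(2m)^{(l)}2^{lp/2} s^{-lp/2}, \]
where the explicit $t$-dependence cancels exactly as it does in Claim~\ref{Claim2}.

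The place where this argument diverges from Claim~\ref{Claim2} is the source of the decay. There the reciprocal threshold factor $t_0^{2l}$ \emph{grew} with $t_0$, so the vanishing had to be extracted from integrating $x^{-lp/2}$ over the \emph{late} window $[t_1,t]$. Here the reciprocal threshold factor is $t_0^{-l/6}$, which already tends to $0$, and the sum runs over the \emph{early} vertices. Choosing $l > 2/p$ (the same choice as in Claim~\ref{Claim2}) gives $lp/2 > 1$, so $\sum_{s=1}^{t_1} s^{-lp/2} \leq 1 + \int_1^\infty x^{-lp/2}\,\text{dx}$ is bounded by a constant independent of $t$. Hence $Pr[A_3] \leq t_0^{-l/6}(2m)^{(l)}2^{lp/2}\frac{lp/2}{lp/2-1}$, and since $t_0 = \log\log\log f(t) \to \infty$ this tends to $0$, which is Claim~\ref{Claim3}.

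The one step I would verify most carefully, and the only real subtlety, is that the dominant contribution now comes from the \emph{smallest} arrival times, since $s^{-lp/2}$ is largest at $s = 1$ — precisely the early vertices that Claim~\ref{Claim1} bundles into a high-degree supernode. This is consistent rather than contradictory: Lemma~\ref{Lemma1} already caps the initial degree of each contracted supernode at $2m$ through the factor $(2m)^{(l)}$, so a per-vertex ceiling of $t_0^{1/6}t^{p/2}$ coexists with a supernode total of $t_0^{1/3}t^{p/2}$, the $t_0^{1/6}$ gap simply recording that the degree is spread over many early vertices. I would also make sure the union bound counts both vertices created in a component event, which at worst doubles the constant and leaves the conclusion unchanged.
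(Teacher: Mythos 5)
Your proof is correct and takes essentially the same approach as the paper: the paper's entire proof of Claim~\ref{Claim3} is ``Use same technique as in Claim~\ref{Claim2},'' and your argument is exactly that technique (union bound over arrival times, rising-factorial moments, Markov's inequality, Lemma~\ref{Lemma1}) with the right adaptation. In particular, you correctly identify that the decay now comes from the threshold factor $t_0^{-l/6}$ combined with the convergence of $\sum_{s\ge 1} s^{-lp/2}$ for $l > 2/p$, rather than from integrating over the late window as in Claim~\ref{Claim2}.
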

\begin{proof}
	Use same technique as in Claim~\ref{Claim2}.
\end{proof}

\begin{claim}
\label{Claim4}
		The k highest degree vertices of $G_t^m$ are added before time $t_1$ and have degree $\Delta_i$ bounded by $t_0^{-1} t^{p/2} \leq \Delta_i \leq t_{0}^{1/6} t^{p/2}_{}$
	\end{claim}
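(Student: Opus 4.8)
The plan is to derive Claim~\ref{Claim4} from the three preceding claims by a short averaging (pigeonhole) argument. The quantitative facts I would lean on are: (i) Claim~\ref{Claim1} forces the total degree of all vertices born before $t_0$ to be at least $t_0^{1/3}t^{p/2}$; (ii) Claim~\ref{Claim3} caps the degree of any single vertex born before $t_1$ (hence before $t_0$) at $t_0^{1/6}t^{p/2}$; and (iii) the number of vertices born before $t_0$ is only $O(t_0)$, since at most $t_0$ time steps have elapsed and each step creates at most two vertices.

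First I would show that at least $k$ vertices born before $t_0$ have degree at least $t_0^{-1}t^{p/2}$. Suppose, for contradiction, that at most $k-1$ of them do. Bounding the degree of those $k-1$ exceptional vertices by $t_0^{1/6}t^{p/2}$ via Claim~\ref{Claim3}, and bounding each of the remaining $O(t_0)$ early vertices by $t_0^{-1}t^{p/2}$, the total degree of the early supernode would be at most $(k-1)t_0^{1/6}t^{p/2} + O(t_0)\cdot t_0^{-1}t^{p/2} = O(t_0^{1/6}t^{p/2}) + O(t^{p/2})$, which is $o(t_0^{1/3}t^{p/2})$ because $k$ is fixed and $t_0^{1/6}/t_0^{1/3}\to 0$ as $t_0 = \log\log\log f(t) \to \infty$. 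This contradicts the lower bound of Claim~\ref{Claim1}, so the claimed $k$ witnesses must exist.

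Next I would combine these witnesses with Claim~\ref{Claim2} to pin down the top $k$ vertices. Since there are $k$ vertices of degree at least $t_0^{-1}t^{p/2}$, the $k$-th largest degree is at least $t_0^{-1}t^{p/2}$, so every $\Delta_i$ satisfies the required lower bound $\Delta_i \ge t_0^{-1}t^{p/2}$. On the other hand, Claim~\ref{Claim2} guarantees that every vertex born after $t_1$ has degree at most $t_0^{-2}t^{p/2} < t_0^{-1}t^{p/2}$, so no such late vertex can be among the top $k$; hence the $k$ highest-degree vertices are all born before $t_1$. The matching upper bound $\Delta_i \le t_0^{1/6}t^{p/2}$ is then immediate from Claim~\ref{Claim3}, which applies to every vertex born before $t_1$ and therefore to each of the top $k$.

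The step I expect to require the most care is the bookkeeping behind the averaging bound: I must argue cleanly that the count of vertices born before $t_0$ is genuinely $O(t_0)$ (accounting for component events and the size-$m$ contraction) and verify that the separations among the exponents $1/6$, $1/3$, $-1$, and $-2$ of $t_0$ are wide enough for the averaged upper bound to fall strictly below Claim~\ref{Claim1}'s lower bound for all large $t$. Since $k$ is a fixed constant and $t_0\to\infty$, these gaps close comfortably, but they are exactly where the constant $k$ could in principle interfere. Finally, because the conclusion is an intersection of the three high-probability events of Claims~\ref{Claim1}--\ref{Claim3}, I would close with a union bound over their complements to conclude that the entire statement holds whp.
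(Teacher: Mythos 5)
Your proposal is correct and follows essentially the same route as the paper: the paper likewise derives the lower bound by contradiction from Claim~\ref{Claim1}, bounding the early supernode's total degree by $(k-1)t_0^{1/6}t^{p/2}$ plus $O(t_0)\cdot t_0^{-1}t^{p/2} \leq t_0^{1/3}t^{p/2}$, and then uses Claim~\ref{Claim2} (via $t_0^{-2}t^{p/2} < t_0^{-1}t^{p/2}$) to exclude late vertices and Claim~\ref{Claim3} for the upper bound. The only cosmetic difference is that you establish the $k$ early witnesses directly while the paper assumes the global lower bound fails, but these are the same averaging argument.
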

\begin{proof} 
If the lower bound does not hold, then one of the top $k$ vertices has degree less than $t_0^{-1} t^{p/2}$ and the total degree of vertices added before time $t_0$ is bounded by
\begin{equation}%
\begin{aligned}
& (k-1) t_0^{1/6} t^{p/2}  +  \left(\frac{t_0}{m} - k + 1 \right)\left(t_0^{-1} t^{p/2} \right) \leq  kt_0^{1/6} t^{p/2} + t_0(t_0^{-1} t^{p/2}) \\
&= k t_0^{1/6} t^{p/2} + t^{p/2} 
= t^{p/2} (k t_0^{1/6} + 1) 
\leq t^{p/2} (2kt_0^{1/6}) 
\leq t^{p/2} t_0^{1/3}. \nonumber
\end{aligned}
\end{equation}%
Since we have the lower bound, and we know that $t^{p/2}/t_0 \geq t^{p/2}/t_0^2$, none of the largest degree vertices could be added after time $t_1$.
\end{proof}%
\begin{claim}%
	\label{Claim5}
The $k$ highest degree vertices have 
	$ \Delta_i \leq \Delta_{i-1} - \frac{t^{p/2}}{f(t)}  $ whp.
\end{claim}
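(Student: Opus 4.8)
The plan is to rule out the complementary event that two of the high-degree vertices have nearly equal degree, and then union-bound. By Claim~\ref{Claim4}, every one of the $k$ largest-degree vertices is added before time $t_1$ and has degree in the window $[t_0^{-1}t^{p/2},\, t_0^{1/6}t^{p/2}]$. Hence it suffices to show that, whp, no two vertices $u \neq v$ added before $t_1$ simultaneously satisfy $d_t(u), d_t(v) \geq t_0^{-1}t^{p/2}$ and $|d_t(u) - d_t(v)| \leq t^{p/2}/f(t)$. Indeed, if no such pair exists then all pairwise gaps among the candidate vertices exceed $t^{p/2}/f(t)$, so in particular the ordered gaps $\Delta_{i-1} - \Delta_i$ do. Since only $O(t_1)$ vertices are added before time $t_1$, there are at most ${O(t_1) \choose 2} = O(t_1^2)$ such pairs, and because $t_1 = \log\log f(t)$ we have $t_1^2 = o(f(t))$; thus it is enough to bound the probability for a single fixed pair by $O(1/f(t))$.

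For a fixed pair I would apply Lemma~\ref{Lemma2} with $l = 2$, taking its initial time to be $t_1$ and treating $u,v$ as singleton supernodes $S_1,S_2$ with initial degrees $d_1, d_2 = O(t_1)$ (the total degree at time $t_1$ is $O(t_1)$). Writing $d_t(u) = r_1 + d_1$ and $d_t(v) = r_2 + d_2$, the bad event is covered by summing $p_S(\vr;\vd,t_1,t)$ over all admissible $r_1 = \Theta(t^{p/2})$ and all $r_2$ with $|r_2 - r_1| \leq 2t^{p/2}/f(t)$. The hypotheses $d = d_1 + d_2 = o(t^{1/2})$ and $r = r_1 + r_2 = O(t^{p/2}) = o(t^{2/3})$ (using $p \leq 1$) hold throughout this range. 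Because the bound of Lemma~\ref{Lemma2} factorizes through $\prod_{i}{r_i + d_i - 1 \choose d_i - 1}$ and $((t_1+1)/t)^{p(d_1+d_2)/2}$, with the exponential error term uniformly $O(1)$ over the range, the double sum separates into an outer sum over $r_1$ of a quantity at most $1$ (it is a probability) times an inner window sum over $r_2$.

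The crux, and the step I expect to be the main obstacle, is to show that this inner window sum contributes only a factor $O(1/f(t))$. Using ${r_i + d_i - 1 \choose d_i - 1} \approx r_i^{d_i-1}/(d_i-1)!$ together with the factor $(t_1/t)^{pd_i/2}$, one argues that the per-vertex degree law places mass $O(t^{-p/2})$ on any single value of $r_i$ throughout the relevant range $r_i = \Theta(t^{p/2})$; that is, the distribution is genuinely spread over a range of width $\Theta(t^{p/2})$ rather than concentrated on any narrow band. Consequently, restricting $r_2$ to a window of width $2t^{p/2}/f(t)$, which is a factor $1/f(t)$ narrower than this natural spread, loses a factor $O(1/f(t))$ relative to the unrestricted $r_2$-sum. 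The delicate part is verifying this uniform one-step bound: one must check that the product of the binomial coefficient, the $(t_1/t)^{pd/2}$ factor, and the error term varies slowly enough across the window that no single value of $r_2$ carries more than an $O(1/f(t))$ fraction of the weight. Granting this, each pair contributes $O(1/f(t))$, the union bound over $O(t_1^2)$ pairs gives $O(t_1^2/f(t)) = o(1)$, and the claim follows, completing the proof of Theorem~\ref{Theorem1}.
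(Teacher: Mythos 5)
Your proposal is correct and follows essentially the same route as the paper's own proof: the paper defines the bad event $A_4$ (two vertices among the first $t_1$ steps with degrees exceeding $t_0^{-1}t^{p/2}$ and within $t^{p/2}/f(t)$ of each other), conditions on the degree upper bound $\overline{A}_3$ from Claim~\ref{Claim3}, applies Lemma~\ref{Lemma2} with two singleton supernodes started at time $t_1$ with initial degrees $d_1,d_2\le 2t_1$, shows that each fixed value of the degree difference carries mass at most $h(t)=t^{-p/2}\cdot\mathrm{poly}(\log f(t))$, and finishes with the window factor $2t^{p/2}/f(t)$ and a union bound over ${t_1 \choose 2}$ pairs --- precisely your per-value-mass $O(t^{-p/2})$, window-width, and pair-count argument, with only the order of summation swapped. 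The one imprecision is your claim that the outer $r_1$-sum of the Lemma~\ref{Lemma2} bound is at most $1$ ``because it is a probability'' (summing an upper bound on joint point probabilities is not a marginal probability); carried out as in the paper it yields a $\mathrm{poly}(\log f(t))$ factor instead, which is harmless since the final bound $\mathrm{Pr}[A_4\mid\overline{A}_3]\le {t_1 \choose 2}\,2\,\mathrm{poly}(\log f(t))/f(t)$ still tends to $0$.
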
%
\begin{proof}
	Let $A_4$ denote the event that there are two vertices among the first $t_1$ time steps with degrees exceeding $t_0^{-1}t^{p/2}$ and within $t^{p/2}/f(t)$ of each other. Let $\overline{A}_3$ be the opposite of event $A_3$ from Claim \ref{Claim3}. Let 
		\begin{equation}
		\begin{aligned}
			\label{eqn:A4}
		 p_{l,s_1,s_2} &= \text{Pr}\left[d_t(s_1)-d_t(s_2) = l \, \, | \, \,  \overline{A}_3 \right], \text{for} \, \, |l| \leq t^{p/2}/f(t)
		 \end{aligned}
		 \end{equation}
	 Then
	 \begin{fullwidth}
	 	\begin{equation} 
	 	\begin{aligned} 
	\text{Pr}\left[  A_4 | \overline{A}_3 \right] 
	&\leq \sum_{1 \leq s_1 < s_2 \leq t_1} \sum_{l = -t^{p/2} / f(t)}^{t^{p/2}/f(t)} p_{l,s_1,s_2} \\
	p_{l,s_1,s_2} 
	&\leq \sum_{r_1 = t_0^{-1}t^{p/2}}^{t_0^{1/6}t^{p/2}} \sum_{d_1, d_2 =1}^{2t_1} \underset{\parbox{100pt}{\scriptsize Notation from Lemma \ref{Lemma2}. }}{ p_{(s_1,s_2)} \left(  (r_1, r_1 - l); (d_1, d_2), t_1, t  \right)} \\
	&\leq \sum_{r_1 = t_0^{-1}t^{p/2}}^{t_0^{1/6}t^{p/2}} \sum_{d_1, d_2 =1}^{2t_1} { r_1 + d_1 -1 \choose d_1 - 1} {r_1 - l + d_2 - 1 \choose d_2 - 1}  
	\left( \frac{t_1 + 1}{t} \right)^{\frac{p(d_1 + d_2)}{2}}  e^{ \left\{  2 + t_1 - \frac{ p(d_1 + d_2)}{2} + \frac{3p(r_1 - l)}{t^{p/2}}   \right\}} \nonumber \\
	&\leq t_0^{1/6} t^{p/2} \sum_{d_1, d_2 = 1}^{2t_1} (2t_0^{1/6} t^{p/2} )^{d_1 + d_2 -2}   (t_1 + 1)^{2pt_1}}  t^{-p(d_1 + d_2)/2} e^{2 + t_1  +3pt_0^{1/6} \\	
	&= t^{-p/2} t_0^{1/6} (2t_1)^2 2^{4t_1} t_0^{2t_1/3} (t_1+1)^{2pt_1} e^{ 2 + t_1  +3pt_0^{1/6}} \nonumber
	\end{aligned}
	\end{equation}%
	\end{fullwidth}
	Denote the last equation as $h(t)$ and note $h(t)$ is a polynomial in $\log(f(t))$ times a factor of $t^{-p/2}$. Then going back to Equation~\eqref{eqn:A4},
	\begin{equation}%
	\begin{aligned}
	\text{Pr}\left[  A_4 | \overline{A}_3 \right] 
	\leq {t_1 \choose 2} 2 \frac{t^{p/2}}{f(t)} h(t) 
	= {t_1 \choose 2} 2\frac{\text{poly}(\log(f(t)))}{f(t)} \nonumber
	\end{aligned}
	\end{equation}%
	which goes to $0$ as $t$ increases.
\end{proof}%
Finishing that final Claim finishes the proof of the theorem.
\end{proof}%

The next result relates maximum eigenvalues and maximal degrees in the GPA model. It is similar to results found in~\citet{MihailPapadimitriou2002,ChungLuVu2003,chung2003spectra,flaxman2005high}. It says that if the degrees follows a power-law with exponent $\beta$, then the spectra follows a power-law as well.
\begin{theorem}
	\label{Theorem2}
	Let $k$ be a fixed integer, and let $f(t)$ be a function with $f(t) \rightarrow \infty$ as $t \rightarrow \infty$. Let $\lambda_1 \geq \lambda_2 \geq \ldots \geq \lambda_k$ be the $k$ largest eigenvalues of the adjacency matrix of $G_t^m$. The for $i = 1, \ldots, k$, we have $\lambda_i = (1 + o(1))\Delta_i^{1/2}$, where $\Delta_i$ is the $i$\nth largest degree.
\end{theorem}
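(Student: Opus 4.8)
The plan is to exploit the elementary fact that the star $K_{1,d}$ has adjacency spectral radius exactly $\sqrt d$, so that a graph dominated by a few large, nearly vertex-disjoint stars should have its top eigenvalues governed by those stars. By Theorem~\ref{Theorem1} the top-$k$ degree vertices $u_1,\dots,u_k$ have degrees $\Delta_1\ge\cdots\ge\Delta_k$, each of order $t^{p/2}$, so that $\sqrt{\Delta_i}\to\infty$. I would first record the structural facts about these vertices that both directions of the bound require: whp any two of them are non-adjacent and share only $o(\Delta_k)$ common neighbors, and more generally the neighborhood of each $u_i$ carries only $o(\Delta_k)$ total degree from other ``heavy'' vertices. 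These should follow from the estimates already developed for Theorem~\ref{Theorem1}: Lemma~\ref{Lemma2} controls the joint degrees of collections of supernodes, and the power law forces the number of vertices of degree $\ge\varepsilon\, t^{p/2}$ to be $O(1)$ for fixed $\varepsilon$.

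\emph{Lower bound.} For each $i$ I would build a test vector $x^{(i)}$ supported on $u_i$ and its neighbors, with value $1$ on $u_i$ and $\Delta_i^{-1/2}$ on each neighbor; this mimics the Perron vector of the star and has Rayleigh quotient $(1+o(1))\sqrt{\Delta_i}$. Because the supports overlap in only $o(\Delta_k)$ vertices, the $x^{(i)}$ are asymptotically orthonormal, so Courant--Fischer gives $\lambda_i\ge(1-o(1))\sqrt{\Delta_i}$.

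\emph{Upper bound.} Fix a slowly vanishing $\varepsilon=\varepsilon(t)$ and let $K=K(\varepsilon)$ be the (constant) number of vertices of degree $\ge\varepsilon\, t^{p/2}$, so that $\Delta_{K+1}<\varepsilon\, t^{p/2}$. Write $A=B+R$, where $B$ is the adjacency matrix of the union of the $K$ stars centered at the $K$ heaviest vertices and $R$ collects all remaining edges. Using near-disjointness, $B$ is a small perturbation of $\bigoplus_i K_{1,\Delta_i}$, so $\lambda_i(B)=(1+o(1))\sqrt{\Delta_i}$ for $i\le k$. In $R$ every degree is below $\varepsilon\, t^{p/2}$; applying the standard bound $\lambda_1(R)^2\le\max_v\sum_{u\sim_R v}d_R(u)$ together with the structural fact that neighborhoods carry only $O(1)$ average degree yields $\lambda_1(R)=O(\sqrt{\varepsilon})\,t^{p/4}=o(\sqrt{\Delta_k})$. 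Weyl's inequality then gives $\lambda_i(A)\le\lambda_i(B)+\lambda_1(R)=(1+o(1))\sqrt{\Delta_i}$, matching the lower bound.

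The main obstacle is the control of $R$: everything hinges on showing that the graph outside the dominant stars has spectral radius $o(t^{p/4})$ even though it still contains vertices of degree nearly $t^{p/2}$. The spectral radius of a sparse graph is far smaller than its maximum degree, but converting that intuition into $\lambda_1(R)=o(\sqrt{\Delta_k})$ requires the second-order estimate that $\sum_{u\sim v}d(u)=O(d(v)\,\mathrm{polylog}\,t)$ for every $v$, i.e. that heavy vertices are not clustered together. Establishing this neighbor-degree bound whp, via Lemma~\ref{Lemma2} applied to pairs and stars of supernodes and a union bound over the $O(1)$ heavy vertices, is where the real work lies; the two Weyl applications and the test-vector computation are then routine.
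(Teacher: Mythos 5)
Your high-level skeleton (dominant stars plus a residual, glued together by Weyl/Rayleigh) is the same as the paper's, and your lower bound via star test vectors is sound. The genuine gap is precisely the step you flag as ``where the real work lies'': establishing $\lambda_1(R)=o(t^{p/4})$, and the route you sketch for it would fail. First, the per-vertex estimate $\sum_{u\sim v}d(u)=O\bigl(d(v)\,\mathrm{polylog}\,t\bigr)$ is false: a degree-one vertex whose unique neighbor is a hub (in $R$, a vertex of degree just below $\varepsilon t^{p/2}$, which is not removed) violates it by a polynomial factor. What your inequality $\lambda_1(R)^2\le\max_v\sum_{u\sim_R v}d_R(u)$ actually requires is the uniform bound $\max_v\sum_{u\sim_R v}d_R(u)=o(t^{p/2})$, and the configurations that threaten \emph{that} are not the heavy vertices at all but the polynomially many vertices of intermediate degree: for example, a vertex of degree close to $\varepsilon t^{p/2}$ most of whose neighbors have degree $\approx t^{p/4}$ would give a walk count $\approx\varepsilon t^{3p/4}\gg t^{p/2}$, and since there are roughly $t^{1/2}$ vertices of degree at least $t^{p/4}$, counting alone does not forbid this. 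Ruling it out requires joint degree/arrival-time estimates over polynomially many vertices, which Lemma~\ref{Lemma2} applied to $O(1)$ supernodes plus a union bound over the $O(1)$ heavy vertices cannot deliver. Second, your parameters fight each other: to make $\varepsilon t^{p/2}\,\mathrm{polylog}\,t=o(t^{p/2})$ you must push $\varepsilon$ below every polylog, but then the number $K$ of vertices of degree at least $\varepsilon t^{p/2}$ is of order $\varepsilon^{-2/p}$, hence superpolylogarithmic rather than constant, so the claimed constancy of $K$, the near-disjointness of the $K$ stars, and the perturbation analysis of $B$ all collapse or must be redone for a growing family of stars.

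For contrast, the paper avoids any neighbor-degree-sum estimate by decomposing by \emph{arrival time} rather than by degree: vertices are split into $S_1$ (added before $t^{1/8}$), $S_2$ (before $t^{9/16}$), and $S_3$ (the rest); the star forest $F$ consists only of edges between $S_1$ and those $S_3$ vertices having exactly one neighbor in $S_1$, so its stars are vertex-disjoint \emph{by construction} and $\lambda_i(F)=\Delta_i(F)^{1/2}$ holds exactly; and the residual splits into six blocks, each of which is either an induced subgraph whose maximum degree is polynomially small (via the time-dependent degree bound coming from Lemma~\ref{Lemma1}, together with Theorem~\ref{Theorem1}) or a near-star-forest whose spectral radius is the square root of a small maximum degree (using that few late vertices have two or more neighbors among the early ones). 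If you want to rescue your degree-based decomposition, the most direct repair is to import exactly this idea: keep from each heavy star only its edges to late-arriving leaves that touch a single heavy vertex, and control everything left over by arrival-time degree bounds rather than by neighbor-degree sums.
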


\begin{proof}
Let $G = G_t^m$. We will show that with high probability $G$ contains a star forest $F$, with stars of degree asymptotic to the maximum degree vertices of $G$. Then show that $G \backslash F$ has small eigenvalues. Then we can use Rayleigh's principle to say that the large eigenvalues of $G$ cannot be too different than the large eigenvalues of $F$. 

Let $S_i$ be the vertices added after time $t_{i-1}$ and at or before time $t_i$, for $t_0 = 0, t_1 = t^{1/8}, t_2 = t^{9/16}, t_3 = t $.
We start by finding bounds on the degrees of $G$.
\begin{claim}%
	\label{Claim6}
For any $\varepsilon > 0$, and any $f(t)$ with $f(t) \rightarrow \infty$ as $t \rightarrow \infty$ the following holds whp: for all $s$ with $f(t) \leq s \leq t$, for all vertices $v \in G_s$, if $v$ was added at time $r$, then $d_s(v) \leq s^{p/2 + \varepsilon} r^{-p/2}$.
\end{claim}%
\begin{proof}
	\begin{equation}%
	\begin{aligned}
	\text{Pr} &\left[ \cup_{s = f(t)}^t \cup_{r=1}^{s} \left\{  d_s^m(r) \geq s^{p/2 + \varepsilon} r^{-p/2}   \right\}   \right] 
	 \leq \sum_{s = f(t)}^t \sum_{r=1}^{s}  \text{Pr}\left[   d_s^m(r) \geq s^{p/2 + \varepsilon} r^{-p/2}     \right] \\
	&= \sum_{s = f(t)}^t \sum_{r=1}^{s}  \text{Pr}\left[   (d_s^m(r))^{(l)} \geq (s^{p/2 + \varepsilon} r^{-p/2})^{(l)}     \right] \nonumber
	\end{aligned}
	\end{equation}%
which is bounded using Markov:
\[\leq \sum_{s = f(t)}^{t} \sum_{r = 1}^{s} s^{-l(p/2 + \varepsilon)} r^{pl/2} \mathbb{E}\left[  (d_s^m(r))^{(l)} \right] \] \nonumber
which we can bound using Lemma \ref{Lemma1}
\[ \leq \sum_{s = f(t)}^{t} \sum_{r = 1}^{s} s^{-l(p/2 + \varepsilon)} r^{lp/2} (2m)^{(l)} 2^{lp/2} \left( \frac{s}{r}  \right)^{lp/2}
	= (2m)^{(l)} 2^{lp/2} \sum_{s = f(t)}^{t} s^{1-\varepsilon l}   \]
	Take $l \geq 3/\varepsilon$. Then we can bound the sum by an integral,
	\begin{equation}%
	\begin{aligned}
	\sum_{s = f(t)}^t s^{1-\varepsilon l} &\leq \int_{f(t) - 1}^{\infty} x^{1-\varepsilon l} \, \, \text{dx} 
	= \left. \frac{1}{2 - \varepsilon l } x^{2 - \varepsilon l } \right|_{f(t) -1}^{\infty} 
	=\frac{1}{ \varepsilon l - 2 } (f(t) -1)^{2 - \varepsilon l} \nonumber
	\end{aligned}
	\end{equation}%
	which goes to zero as $t$ increases, since $l \geq 3/\varepsilon$.
\end{proof}%

\begin{claim}
	\label{Claim7}
	Let $S_3'$ be the set of vertices in $S_3$ that are adjacent to more than one vertex of $S_1$ in $G$. Then $|S_3'| \leq t^{7p/16}$ with high probability.
\end{claim}
\begin{proof}
	Let $B_1$ be the event that the conditions of Claim \ref{Claim6} hold with $f(t) = t_2$ and $\varepsilon = 1/16$. Then for a vertex $v \in S_3$ added at time $s$, the probability that $v$ picks at least one neighbor in $S_1$ is less than or equal to 
	\begin{equation}%
	\frac{ \sum_{w \in S_1} d_s(w) }{2s-1}  \leq \frac{ \sum_{w \in S_1} s^{p/2 + \varepsilon} }{2s-1} = \frac{t_1 s^{p/2 + \varepsilon}}{2s-1} \nonumber
	\end{equation}%
	So, the probability of having two or more neighbors in $S_1$ can be bounded by,
	\begin{equation}%
	\begin{aligned}
	\text{Pr}\left[ \left. |N(v) \cap S_1 | \geq 2 \, \, \right|  B_1 \right] &\leq \left( \frac{t_1 s^{p/2 + \varepsilon} }{2s-1} \right)^2 \cdot \binom{m}{2} 
	\leq m^2 t^{1/4}s^{(-15 + 8p)/8} \nonumber
	\end{aligned}
	\end{equation}%
	Let $X$ denote the number of $v \in S_3$ adjacent to more than one vertex of $S_1$. Then 
	\begin{equation}%
	\begin{aligned}
	\mathbb{E}[ X | B_1 ] &\leq \sum_{t_2 + 1}^{t} m^2 s^{(-15 + 8p)/8}t^{1/4} 
	\leq m^2 t^{1/4} \int_{t_2}^t x^{(-15 + 8p)/8} \,\, \text{dx} 
	\\&= m^2 t^{1/4} \left[ \left. \frac{8}{7 + 8p} x^{(-7 + 8p)/8} \right|_{t_2}^t \right] 
	\leq \frac{8 m^2 t^{1/4}}{-7 + 8p} t^{(-7 + 8p)/8}  \nonumber
	\end{aligned}
	\end{equation}%
	Then by Markov,
	\begin{equation}%
	\begin{aligned}
	\text{Pr}&\left[ X \geq t^{7p/16} | B_1 \right] \leq \frac{ \mathbb{E}[X | B_1] }{t^{7p/16}} 
	\leq \frac{ 8m^2}{8p - 7} \frac{ t^{p - 5/8}}{t^{7p/16}} 
	= \frac{ 8m^2}{8p - 7} \frac{ t^{-5/8}}{t^{-9p/16}} \nonumber
	\end{aligned}
	\end{equation}%
	And $ \frac{ t^{-5/8}}{t^{-9p/16}}  = \frac{t^{9p/16}}{t^{5/8}} \leq \frac{t^{9/16}}{t^{5/8}}$ which goes to zero. 
	\end{proof}

Let $F \subseteq G$ be the star forest consisting of edges between $S_1$ and $S_3 \backslash S_3'$.

	\begin{claim} %
		\label{Claim8}
		Let $\Delta_1 \geq \Delta_2 \geq \ldots \geq \Delta_k$ denote the degrees of the $k$ highest degree vertices of $G$. Then $\lambda_i(F) = (1 - o(1)) \Delta_i^{1/2}$.
	\end{claim}%
	\begin{proof}
		Denote $K_{1,d_i}$ to be a star of degree $d_i$. Let $H$ be the star forest $H = K_{1,d_1} \cup \ldots \cup K_{1,d_k}$ with $d_1 \geq d_2 \geq \ldots \geq d_k$. Then for $i = 1, \ldots, k$, $\lambda_i(H) = d_i^{1/2}$. So it will be sufficient to show that $\Delta_i(F) = (1-o(1)) \Delta_i(G)$. Within the proof of Theorem~\ref{Theorem1}, we show that the $k$ highest degree vertices $G$ are added before time $t_1$ (specifically in Claim \ref{Claim4} in the Supplement). So these vertices are all in $F$. The only edges to those vertices that are not in $F$ are those added before time $t_2$ and those incident to $S_3'$. 
		
		By Theorem \ref{Theorem1} we can choose $f(t)$ such that, $\Delta_1(G_{t_2}^m) \leq t_2^{p/2} f(t) \leq t^{7p/16}$. Also by Theorem \ref{Theorem1}, $\Delta_i(G) \geq t^{p/2} / \log{t}$. Finally, Claim \ref{Claim7} says that $|S_3'| \leq t^{7p/16}$ whp. And so, with high probability,
		\begin{equation}%
		\begin{aligned}
		&\Delta_i(F)  \geq \Delta_i(G) - t^{7p/16} - mt^{7p/16}
		\geq \frac{t^{p/2}}{\log{t}} - t^{7p/16} (1 + m) \\
		& = \frac{t^{p/2}}{\log{t}} \left[ 1 - t^{7p/16} (1+m) \frac{ \log{t}}{t^{p/2}}   \right] 
		= \frac{t^{p/2}}{\log{t}} \left[ 1 - (1+m) \frac{ \log{t}}{t^{p/2-7p/16}}   \right] 
		&\\& = \frac{t^{p/2}}{\log{t}} \left[ 1 - (1+m) \frac{ \log{t}}{t^{p/16}}   \right] 
		= (1-o(1)) \Delta_i(G) \nonumber
		\end{aligned}%
		\end{equation}%
	\end{proof}

Let $H = G \backslash F$. Denote $\mA_G, \mA_F$ and $\mA_H$ to be the adjacency matrices for graphs $G,F$ and $H$. In the following claim, we'll show that $\lambda_1(\mA_H)$ is $o(\lambda_k(\mA_F))$. Consider the fact that if $\mA$ and $\mA+\mE$ are symmetric $n \times n$ matrices, then $\lambda_k(\mA) + \lambda_n(\mE) \leq \lambda_k(\mA) + \lambda_1(\mE)$ (see for instance~\citet{GolubVanLoan2013}). That implies that for any subspace $L$, 
\[   \underset{\vx \in L,\vx \neq 0}{\text{max}} \frac{\vx^T \mA_G \vx}{\vx^T\vx} = \underset{\vx \in L, \vx \neq 0}{\text{max}}  \frac{\vx^T \mA_F \vx}{\vx^T \vx} \pm O \left(   \underset{\vx \neq 0}{\text{max}} \frac{\vx^T \mA_H \vx}{\vx^T \vx}  \right). \]
This is enough to finish the proof because by Rayleigh's Principle~\cite{GolubVanLoan2013},
$ \lambda_i(\mA_G) = \lambda_i(\mA_F)(1 \pm o(1) )$.

\begin{claim}
	\label{Claim9}
	$\lambda_1(\mA_H) \leq 6m t^{15/64}$ whp.
\end{claim}
\begin{proof}
	Let $H_i$ denote the subgraph of $H$ induced by $S_i$, and let $H_{ij}$ denote the subgraph of $H$ containing only edges with one vertex in $S_i$ and the other in $S_j$. That is, write $\mA_H$ in the following way:
	\[ \mA_H = \begin{bmatrix} \mH_1 & \mH_{12} & \mH_{13} \\ \mH_{21} & \mH_2 & \mH_{23} \\ \mH_{31} & \mH_{32} & \mH_3   \end{bmatrix}, \]
	and use this to bound the maximal eigenvalue of $\mA_H$ as
	\begin{equation}%
	\begin{aligned}
	\lambda_1(\mA_H) 
	&= \lambda_1\left( \begin{bmatrix} \mH_1 & \mH_{12} & \mH_{13} \\ \mH_{21} & \mH_2 & \mH_{23} \\ \mH_{31} & \mH_{32} & \mH_3   \end{bmatrix}\right) \\
	& \leq \lambda_1(\mH_1) + \lambda_1(\mH_2) + \lambda_1(\mH_3) + \lambda_1(\mH_{12}) + \lambda_1(\mH_{23}) + \lambda_1(\mH_{13}). \nonumber
		\end{aligned}
	\end{equation}%
	Note that the maximum eigenvalue of a graph is at most the maximum degree of a graph. By Claim \ref{Claim6} with $f(t) = t_1$ and $\varepsilon = 1/64$, 
	\begin{equation}%
	\begin{aligned}
	&\lambda_1(\mH_1) \leq \Delta_1(\mH_1) = \max_{v \leq t_1} \{ d_{t_1}^m(v)\} \leq t_1^{p/2 + \varepsilon} \leq t^{33/512} \\
	&\lambda_1(\mH_2) \leq \Delta_1(\mH_2) \leq \max_{t_1 \leq v \leq t_2} \{ d_{t_2}^m(v)\} \leq t_2^{p/2 + \varepsilon} / t_1^{p/2} \leq t^{233/1024} \\
	&\lambda_1(\mH_3) \leq \Delta_1(\mH_3) \leq \max_{t_2 \leq v \leq t_3} \{ d_{t_3}^m(v)\} \leq t_3^{p/2 + \varepsilon} / t_2^{p/2} \leq t^{15/64} \nonumber
	\end{aligned}
	\end{equation}%
To bound $\lambda_1(\mH_{ij})$, start with $m = 1$. For $i<j$, this implies that each vertex in $S_j$ has at most one edge in $\mH_{ij}$, i.e. $H_{ij}$ is a star forest. Then we have a bound on $\mH_{ij}$ by Claim \ref{Claim8}. For $m > 1$, let $G'$ be one of our generated graphs with $t$ edges and $m = 1$. Think now of contracting vertices in $G'$ (only the ones added using preferential attachment) into a single vertex. We can write $\mA_G$ in terms of $\mA_G'$: 
		$\mA_G = \mC^T \mA_G' \mC,$
 where $\mC$ is a contraction matrix with $t$ rows and the number of columns equal to the number of vertices in $\mA_G$ (at most $t/m$). The $i$\nth column is equal to $1$ at indices $j$ in which $(i,j)$ are identified. Similarly, we can write $\mH_{ij}$ in terms of $\mH_{ij}'$. 
	
Note that if $\vy = \mC \vx$, then $\vy^T \vy = \vx^T \mC^T \mC \vx$, where $\mC^T \mC$ is a diagonal matrix with $1's$ and $m'$s on the diagonal. So $\vx^T \vx \leq \vy^T \vy \leq m \vx^T \vx.$
	\begin{equation}%
	\label{eqn:HH'}
	\begin{aligned}
	\lambda_1(\mH_{ij}) &= \max_{\vx \neq 0} \frac{ \vx^T \mH_{ij} \vx}{\vx^T \vx} 
	= \max_{\vx \neq 0} \frac{ \vx^T \mC^T \mH_{ij}' \mC \vx}{\vx^T \vx} 
	=  \max_{\vx \neq 0, \vy = \mC \vx} \frac{ \vy^T  \mH_{ij}' \vy}{\vx^T \vx} \\
	&=  \max_{\vx \neq 0, \vy = \mC \vx} \frac{ m \vy^T  \mH_{ij}' \vy}{m \vx^T \vx}
	\leq  \max_{\vx \neq 0, \vy = \mC \vx} \frac{ m \vy^T  \mH_{ij}' \vy}{\vy^T \vy}
	\end{aligned}
	\end{equation}%
Now using Claim \ref{Claim6} with $f(t) = t_1$ and $\varepsilon = 1/64$,
\begin{equation}%
\label{eq:DegH}
\begin{aligned}
&\Delta_1(\mH_{12}') = \max_{\vv \leq t_2} \{ d_{t_2}'(v)\} \leq t_2^{p/2 + \varepsilon} \leq t^{297/1024} \\
&\Delta_1(\mH_{23}') = \max_{t_1\leq v \leq t_3} \{ d_{t_3}'(v)\} \leq t_3^{p/2 + \varepsilon} / t_1^{p/2} \leq t^{29/64} \\
\end{aligned}
\end{equation}%
Finally, all edges in $\mH_{13}'$ are between $S_1$ and $S_3'$, so Claim \ref{Claim7} shows $\Delta_1(\mH_{13}') \leq t^{p - 9/16} \leq t^{7/16}$ whp. Putting together Equations~\eqref{eqn:HH'} and~\eqref{eq:DegH}, we get
$ \lambda_1(\mH_{ij}) \leq m\lambda_1(\mH_{ij}') \leq m \Delta_1(\mH_{ij}')^{1/2} \leq mt^{15/64} $.
And so we get the final bound,
\[\lambda_1(\mA_H) \leq \sum_{i = 1}^3 \lambda_1(\mH_i) + \sum_{i < j} \lambda_1(\mH_{ij}) \leq 6mt^{15/64} \]
This shows that $\lambda_i(\mA_H)$ is $o(\lambda_k(\mA_F))$, which implies $\lambda_i(\mA_G) = \lambda_i(\mA_G) = \lambda_i(\mA_F)(1 \pm o(1))$. \hspace{1.5in} \rlap{$\qquad \Box$}
\end{proof}%
\end{proof}%

\section{TGPA}
\label{sec:TGPA} 

\begin{marginfigure}[50\baselineskip]
	{\includegraphics[width=\linewidth]{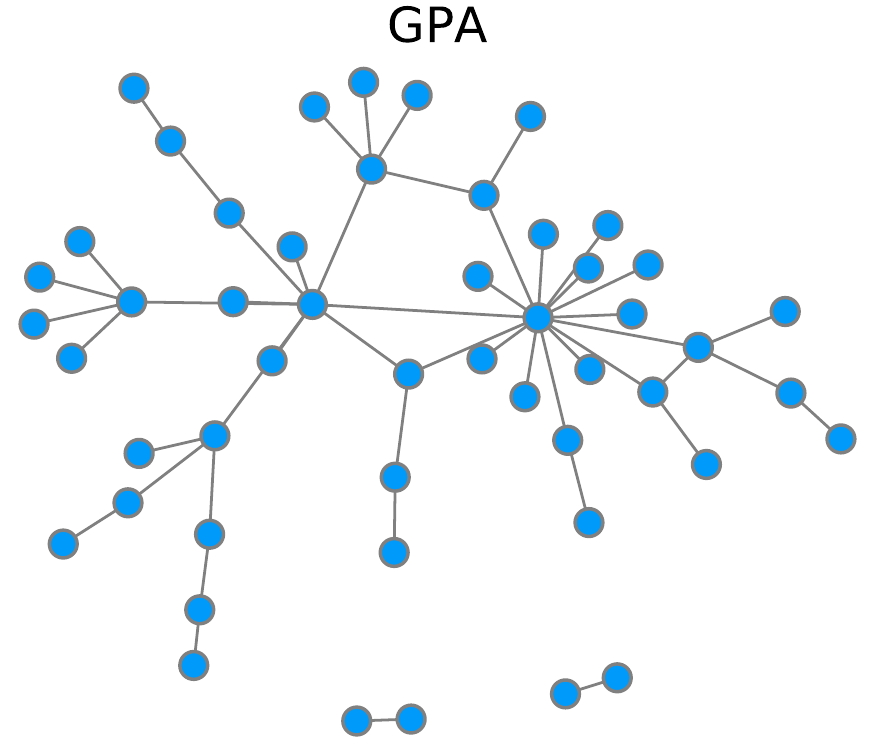}} \\ \vspace{0.2in}
	{\includegraphics[width=\linewidth]{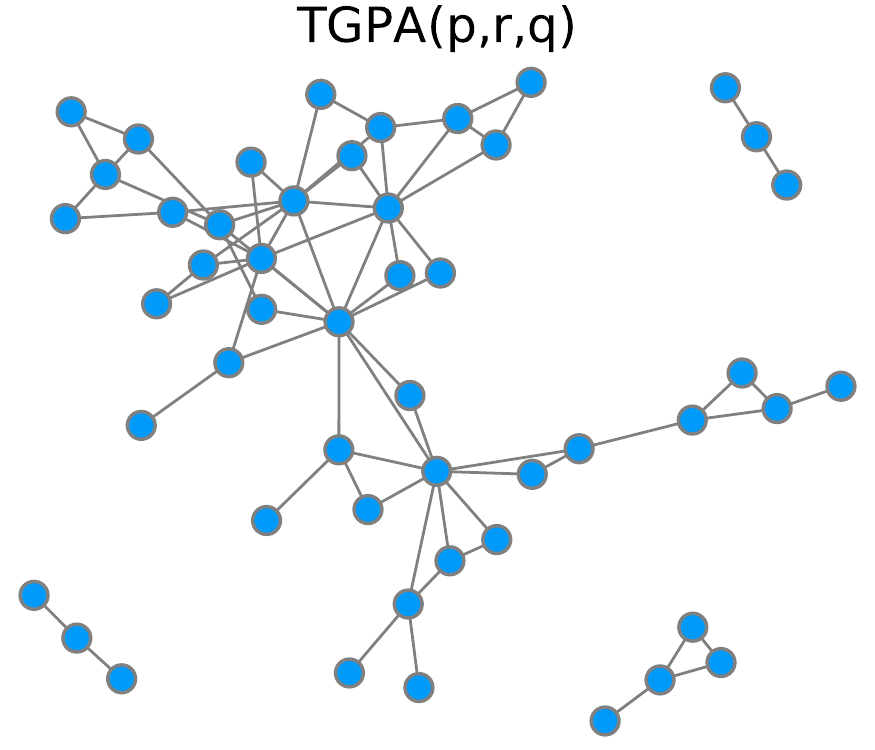}} \\  \vspace{0.2in}
	{\includegraphics[width=\linewidth]{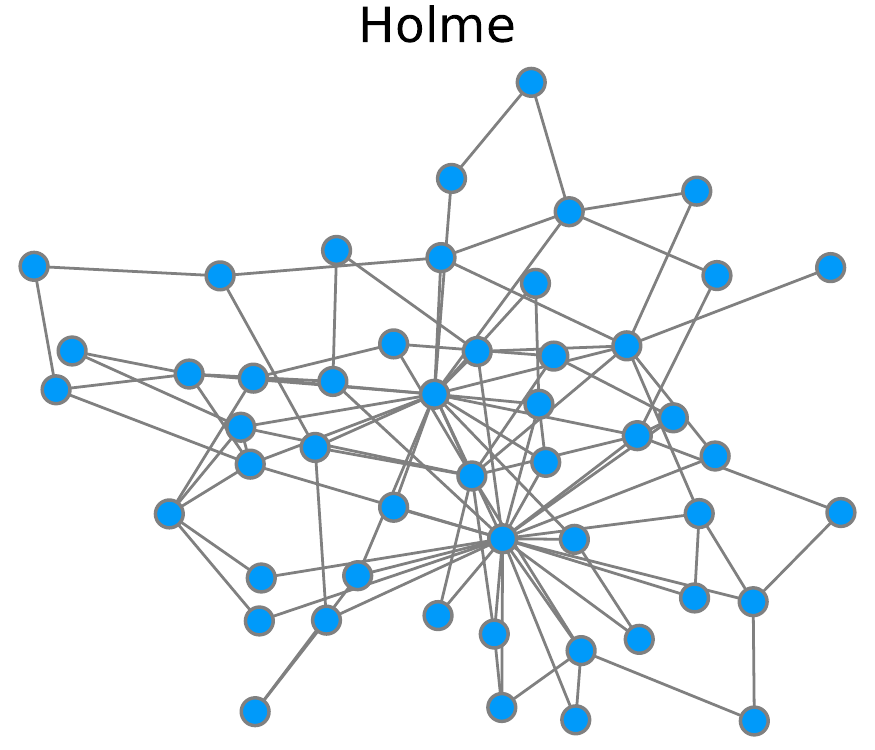}} \\  \vspace{0.2in}
	{\includegraphics[width=\linewidth]{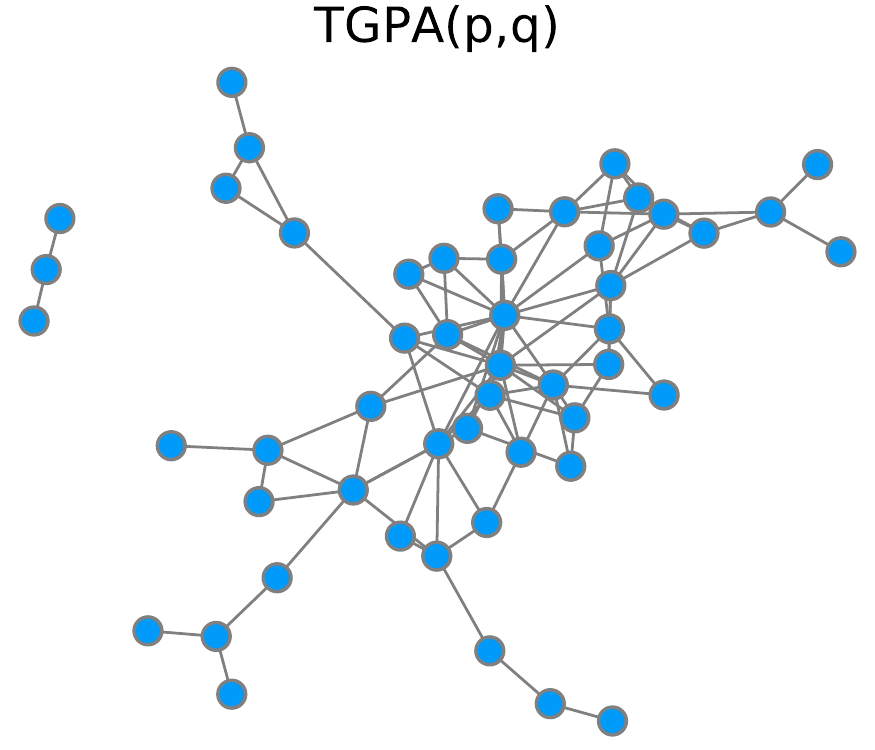}}
	\caption{Examples of 50 node graphs. The top two figures were generated using $p = 0.8, r = 0.1, q = 0.1$. The graphs on the bottom were generated using $m = 2$, and TGPA($p,q$) used $p = 0.85$. See the text for the details on these parameters.}	
	\label{fig:examples}
\end{marginfigure}

In this section we present our model which we call Triangle Generalized Preferential Attachment (TGPA). This model is motivated by the purpose of adding higher order structure into the resulting graph as discussed in Section~\ref{sec:HigherOrder}, and a recent paper~\cite{avin2017improved} which shows a model of preferential attachment with any power-law exponent (Section~\ref{sec:GPA}). We present two different versions of the model. The first, in Section~\ref{sec:v1} follows the PA model as described by~\citet{barabasi1999emergence,flaxman2005high}, and the second in Section~\ref{sec:v2} follows the PA model as described in~\citet{chung2006complex,avin2017improved}. 
Though these models are not the same, they share similar properties. In Sections~\ref{sec:SpectraTGPA} and ~\ref{sec:DegreeTGPA} we'll see each formulation is useful for the analysis of the models. Figure~\ref{fig:examples} shows some example graphs generated by TGPA compared to existing models.

\subsection{TGPA($p,q$)}
\label{sec:v1}
Start with an empty graph. At time $t = 1,2, \ldots$ do one of the following:
\begin{enumerate}
	\item (node event) With probability $p$, add a new vertex $v_t$, and an edge from $v_t$ to some other vertex $u$ where $u$ is chosen with probability
		\begin{equation}%
		\label{eq:PA}
		Pr[u = v_i] = \left\{  \begin{array}{ ll } 
		\frac{d_t(v_i)}{4t-2}, & \text{if} \,\, v_i \neq v_t \\
		\frac{2}{4t-2}, & \text{if}\,\, v_i = v_t
		\end{array} \right.
		\end{equation}%
		Then pick a neighbor of $u$, call it $w$, and also add an edge from $v_t$ to $w$. We pick $w$ with the following probability: 
		\begin{equation}%
		\label{eq:PA2}
		Pr[w = v_i] = \left\{  \begin{array}{ ll } 
		\frac{\text{\# edges between}\,\,  u,w}{d_{t-1}(u)}, & \text{if} \,\, v_i \neq u \\
		\frac{2\cdot \text{\# self-loops of}\, \, u}{d_{t-1}(u)}, & \text{if}\,\, v_i = u
		\end{array} \right.
		\end{equation}%
\item (component event) With probability $q = 1-p$ add a wedge to the graph (3 new nodes with 2 edges)
\item For some constant $m$, every $m$ steps contract the most recently added vertices through the preferential attachment steps (in step 1) to form a super vertex.
\end{enumerate}

Note that vertex $w$ (chosen in step 1) is \emph{also} chosen via preferential attachment. The probability of picking $w$ is the probability of picking $u$ as a neighbor of $w$ times the probability of picking $w$. 
\begin{equation}%
\begin{aligned}%
Pr[w = v_i] &= \frac{ \sum_{u \in N(w)} d_{t-1} (u) }{4t-2} \cdot \frac{\text{num edges between}\,\,  u,w}{d_{t-1}(u)} = \frac{d_t(w)}{4t-2} \nonumber
\end{aligned}%
\end{equation}%

\subsection{TGPA($p_t, r_t, q_t$)}
\label{sec:v2}
Start with a graph with $e_0$ edges. At time $t = 1,2, \ldots$ do one of the following:
\begin{enumerate}
	\item (node event) With probability $p_t$, add a new vertex $v_t$, and an edge from $v_t$ to some other vertex $u$ where $u$ is chosen with probability given in Equation~\eqref{eq:PA}.
	Then pick a neighbor of $u$, call it $w$, as in Equation~\eqref{eq:PA2}. Add edge an edge from $v_t$ to $w$. 
	\item (wedge event) With probability $r_t$ add a wedge to the graph by picking two nodes using preferential attachment: $v_1, v_2$. Pick the third node uniformly from a neighbor of $v_1$, call it $w$. Add edges $(v_1,v_2)$ and $(v_1,w)$.
	\item (component event) With probability $q_t$ add a wedge to the graph (3 new nodes with 2 edges).
\end{enumerate}


\section{Analysis of TGPA$(p,q)$}
\label{sec:SpectraTGPA}

In this section we present results on the degrees and spectra of the TGPA$(p,q)$ model described in Section~\ref{sec:v1}. We do not prove these results, however the proofs follow the proof techniques presented in Section~\ref{sec:GPAResults}. The key difference in these proofs is the fact that two edges may be added in each time step. This makes the preferential attachment much more tedious to track through graph generation. In Lemma~\ref{TGPALemma2} for example, we consider disjoint (but not \emph{disconnected}) sets of supernodes; the probability of the supernodes increasing in degree is not independent from one other. 

Fix parameter $p$. Denote $G_t^m(p,q)$ as the Triangle Generalized Preferential Attachment Graph at time $t$ with contractions of size $m$.
\begin{lemma}
	\label{TGPALemma1}
	Let $d_t(s)$ be the degree of vertex $s$ in $G_t^m(p,q)$, for any time $t$ after $s$ has been added to the graph. Let $a^{(\overline{k})} = a(a+2)(a+4)\cdots (a+k-2)$ be a modified rising factorial function. Let $s'$ be the time at which node $s$ arrives in the graph. Then for any positive integer $k$,
		\[  \mathbb{E}[(d_t(s))^{(\overline{k})} ]\leq (4m)^{(\overline{k})} 2^{pk} \left(  \frac{t}{s'}\right)^{pk}    \]
\end{lemma}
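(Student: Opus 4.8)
The plan is to follow the proof of Lemma~\ref{Lemma1} almost verbatim, reducing everything to a single multiplicative recurrence for the (modified) factorial moments of the degree of a fixed vertex $s$; the only genuinely new ingredient is that a node event of $G_t^m(p,q)$ inserts \emph{two} edges instead of one. Write $Z_t = d_t^m(s)$ and let $Y_t$ be the number of the (at most two) edges created at step $t$ that are incident to $s$. First I would record the per-step attachment probabilities: with probability $p$ a node event occurs, its first edge attaches to $s$ with probability $d_{t-1}(s)/(4t-2)$, and, using the identity derived just after the model in Section~\ref{sec:v1}, its second (triangle-closing) edge attaches to $s$ with the same marginal probability $d_{t-1}(s)/(4t-2)$. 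Since $Y_t \le 2$ always, I would dominate $(Z_{t-1}+Y_t)^{(\overline{k})} \le (Z_{t-1}+2)^{(\overline{k})}$ on the event $Y_t \ge 1$ and invoke the telescoping identity $(Z+2)^{(\overline{k})} = Z^{(\overline{k})}\,(Z+k)/Z$. This is exactly why the step-two modified factorial $a^{(\overline{k})}$ is the right object here: an increment of size up to two collapses into the clean multiplicative factor $1 + k/Z$. Conditioning on $Z_{t-1}$ then yields a recurrence of the shape $\mathbb{E}[Z_t^{(\overline{k})} \mid Z_{t-1}] \le Z_{t-1}^{(\overline{k})}\bigl(1 + \tfrac{2pk}{2t-1}\bigr)$, where the leading factor of two (compared with the $1+\tfrac{pk}{2t-1}$ of Lemma~\ref{Lemma1}) is the contribution of the two attachment opportunities per node event.

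Taking expectations and iterating this bound down to the arrival time $s'$ of $s$ gives $\mathbb{E}[Z_t^{(\overline{k})}] \le Z_{s'}^{(\overline{k})} \prod_{t'=s'+1}^{t}\bigl(1+\tfrac{2pk}{2t'-1}\bigr)$. A vertex entering through a node event starts with degree two (at most four once self-loops are allowed), so after at most $m$ contractions its supernode degree at time $s'$ is at most $4m$, giving $Z_{s'}^{(\overline{k})} \le (4m)^{(\overline{k})}$ and accounting for the prefactor. Exactly as in Lemma~\ref{Lemma1} I would then apply $1+x \le e^x$, bound $\sum_{t'=s'+1}^t \tfrac{1}{t'-1/2} \le \int_{s'}^t \tfrac{dx}{x-1/2} = \log\tfrac{t-1/2}{s'-1/2}$, and finally use $\tfrac{t-1/2}{s'-1/2} = \tfrac{t}{s'}\cdot\tfrac{2-1/t}{2-1/s'} \le 2\,\tfrac{t}{s'}$ to convert the product into $2^{pk}(t/s')^{pk}$, which is the stated bound.

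The main obstacle is the very first step. Unlike the GPA model, the two new edges are \emph{not} independent, because $w$ is chosen among the neighbors of the already-selected $u$; hence $Y_t$ is not a sum of two independent indicators and $s$ could in principle be hit by both edges. I would sidestep describing the joint law of the two increments by working only with the marginal attachment probability of each edge (which is all the recurrence needs once the increment is dominated by the $+2$ step) together with the crude but safe bound $(Z_{t-1}+Y_t)^{(\overline{k})} \le (Z_{t-1}+2)^{(\overline{k})}$. The genuinely delicate points, which must be checked rather than assumed, are that the second edge's marginal attachment probability really is $d_{t-1}(s)/(4t-2)$ under the contraction and self-loop conventions, and that accounting for the correlated pair produces the rate $\tfrac{2pk}{2t-1}$ honestly rather than by over-counting: it is precisely this factor of two that turns the GPA exponent $pk/2$ into the $pk$ claimed here, so the whole strength of the lemma rests on getting that constant right.
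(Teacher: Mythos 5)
The paper never proves Lemma~\ref{TGPALemma1}: Section~\ref{sec:SpectraTGPA} states that the TGPA$(p,q)$ results are not proved and that the arguments ``follow the proof techniques presented in Section~\ref{sec:GPAResults}.'' So the only meaningful comparison is against the intended adaptation of Lemma~\ref{Lemma1}, and your scaffolding is exactly that adaptation: the per-edge marginal attachment probability $d_{t-1}(s)/(4t-2)$ for \emph{both} edges of a node event (the paper itself verifies this for the triangle-closing edge at the end of Section~\ref{sec:v1}), domination of the increment by $2$, initial supernode degree at most $4m$, and the $1+x \le e^x$, integral-comparison, $2^{pk}$ endgame. All of that is right.

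The gap is in the step you yourself call the crux. From your own ingredients the recurrence rate $1+\frac{2pk}{2t-1}$ does not follow. The union bound over the two edges gives $\Pr[Y_t \ge 1 \mid Z_{t-1}] \le 2 \cdot \frac{p Z_{t-1}}{4t-2} = \frac{p Z_{t-1}}{2t-1}$, which is \emph{the same} hit probability as in the GPA model: the doubling from two attachment opportunities is exactly cancelled by the doubled normalization ($4t-2$ versus $2t-1$). Combined with your identity $(Z+2)^{(\overline{k})} = Z^{(\overline{k})}(Z+k)/Z$, this yields $\mathbb{E}[Z_t^{(\overline{k})} \mid Z_{t-1}] \le Z_{t-1}^{(\overline{k})}\left(1+\frac{pk}{2t-1}\right)$, so attributing the extra factor of $2$ to ``two attachment opportunities'' is double counting. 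The factor of $2$ actually lives elsewhere: in the factorial bookkeeping of the increment, not in the hit probability. If $a^{(\overline{k})}$ is read as the $k$-term product $a(a+2)\cdots(a+2k-2)$ --- the reading under which the lemma is consistent with the $t^{p}$ scaling claimed in Theorem~\ref{TGPATheorem1} --- then the correct telescoping identity is $(Z+2)^{(\overline{k})} = Z^{(\overline{k})}(Z+2k)/Z$, and charging each hit an increment of $2$ honestly produces the rate $1+\frac{2pk}{2t-1}$ and the stated bound $(4m)^{(\overline{k})} 2^{pk}(t/s')^{pk}$. Under the literal $(k/2)$-term reading in the statement, your identity is correct but the honest rate is $1+\frac{pk}{2t-1}$, which gives the stronger bound $(4m)^{(\overline{k})} 2^{pk/2}(t/s')^{pk/2}$ and still implies the lemma since $t \ge s'$. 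So your conclusion is true under either reading of $a^{(\overline{k})}$, but your write-up pairs the identity from one reading with the rate from the other; the pivotal constant is asserted, not derived, at exactly the point where you said it had to be checked.
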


\begin{lemma}
	\label{TGPALemma2}
	Let $S = (S_1, S_2, \ldots, S_l)$ be a disjoint collection of supernodes at time $t_0$. Assume that the degree of $S_i$ at time $t_0$ is $d_{t_0} (S_i) = d_i$. Let $t$ be a time later than $t_0$. Let $p_S(\vr;\vd,t_0,t)$ be the probability that each supernode $S_i$ has degree $r_i + d_i$ at time $t$. Let $d = \sum_{i = 1}^l d_i, r = \sum_{i = 1}^{l} r_i$. If $d = o(t^{1/2})$ and $r = o(t^{2/3})$, then
		\[p_S(\vr; \vd, t_0, t) \leq \left( \prod_{i = 1}^l {r_i + d_i-1 \choose d_i -1} \right) \left( \frac{t_0}{t-1} \right)^{pd} \text{exp} \left\{ 3 + 2t_0 - pd + \frac{19pr}{t^p_{}} \right\} \]
\end{lemma}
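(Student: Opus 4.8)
The plan is to replay the proof of Lemma~\ref{Lemma2} for the TGPA$(p,q)$ dynamics, where the one change that matters is that a node event inserts \emph{two} edges---one to $u$ and one to a neighbor $w$ of $u$---rather than one. First I would reuse the same skeleton: let $\tau^{(i)} = (\tau_1^{(i)}, \dots, \tau_{r_i}^{(i)})$ be the times at which supernode $S_i$ records its successive degree increments, form the ordered union $\tau = (\tau_0, \dots, \tau_{r+1})$ with $\tau_0 = t_0$ and $\tau_{r+1} = t$, and let $p(\tau; \vd, t_0, t)$ be the probability that the $S_i$ grow exactly at those times. As in Lemma~\ref{Lemma2}, $p(\tau; \vd, t_0, t)$ should factor into a product of ``hit'' probabilities times a product of ``miss'' probabilities, with the normalization now $4t-2$ (the total degree is $\approx 4t$ since every step adds two edges, cf.\ Equations~\eqref{eq:PA} and~\eqref{eq:PA2}). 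I would then sum over all ordered $\tau$ exactly as in the passage leading to Equation~\eqref{eqn:almostfinal}: pull out the multinomial coefficient $\binom{r}{r_1, \dots, r_l}$, collapse the product $\prod_i \binom{r_i + d_i - 1}{d_i - 1}$, reindex $\tau_k' = \tau_k - \lceil \cdot \rceil$, and bound the resulting ordered sum by an $r$-th power via $1 + x \le e^x$.

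The analytic core is identical in shape to Lemma~\ref{Lemma2}: approximate each miss factor $\log(1 - p\frac{d+k}{\cdot})$ by its integral, split the accumulated integral into a boundary term $A$ (as in Equation~\eqref{eqn:A}) and a telescoping sum $\sum_k B_k$ (as in Equation~\eqref{eqn:Bk}), and control each with $1 + x \le e^x$ and $1 - x \le e^{-x - x^2/2}$. Carrying the two-edge normalization $4t-2$ and the paired increments through these same estimates is what produces the stated prefactor $\left(\frac{t_0}{t-1}\right)^{pd}$, the enlarged additive constants $\{3 + 2t_0 - pd\}$ in place of $\{2 + t_0 - \frac{pd}{2}\}$, and the error coefficient $\frac{19pr}{t^{p}}$ in place of $\frac{3pr}{t^{p/2}}$; the doubling of the degree-growth exponent (and hence the replacement $t^{p/2} \mapsto t^{p}$ throughout) is exactly the doubling already recorded in Lemma~\ref{TGPALemma1}, so I would import it from there rather than re-derive it. Imposing $d = o(t^{1/2})$ and $r = o(t^{2/3})$ then annihilates the lower-order error terms and yields the claimed bound.

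The main obstacle---and the reason the section preamble warns that the supernode increments are \emph{not independent}---is the coupling of the two edges: because $w$ is chosen as a neighbor of the PA-selected $u$, a single node event can raise the degrees of two distinct supernodes at once, or raise one supernode's degree by two, so the clean ``one step, at most one hit'' bookkeeping of the GPA proof breaks. The two ingredients I would use to restore a factorizable bound are (i) the marginal identity $\text{Pr}[w = v_i] = d_t(w)/(4t-2)$ established in Section~\ref{sec:v1}, which shows each of the two edges is individually preferential, and (ii) an upper bound of the joint two-edge hit event by the product of the marginal single-edge hit probabilities, so that the factorized form of $p(\tau; \vd, t_0, t)$ survives with only the constants degraded. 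Paired degree increments are precisely why Lemma~\ref{TGPALemma1} uses the modified rising factorial $a^{(\overline{k})} = a(a+2)\cdots(a+k-2)$ instead of $a^{(k)}$, and I would invoke that moment bound (rather than Lemma~\ref{Lemma1}) at the corresponding step. I expect step (ii)---bounding the correlated two-edge event cleanly enough that the extra slack is absorbable into the enlarged constants $3$, $2t_0$, and $19$---to be the hardest part.
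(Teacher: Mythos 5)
Your plan coincides with the paper's: the paper does not actually prove Lemma~\ref{TGPALemma2} at all, stating only that the proof follows the techniques of Section~\ref{sec:GPAResults}, with the key difference that two edges may be added per time step so that the supernode degree increments are no longer independent. Your proposal is exactly that adaptation of the argument for Lemma~\ref{Lemma2} (same $\tau$ bookkeeping, same $A$ and $B_k$ integral estimates, importing the modified rising factorial from Lemma~\ref{TGPALemma1}), and the obstacle you single out as hardest---controlling the correlated two-edge hit events---is precisely the difficulty the paper itself flags but leaves unresolved in print.
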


\begin{theorem}
	\label{TGPATheorem1}
	Let $m$, $k$ be fixed positive integers, and let $f(t)$ be a function with $f(t) \rightarrow \infty$ as $t \rightarrow \infty$. Let $\Delta_1 \geq \Delta_2 \geq \ldots \geq \Delta_k$ denote the degrees of the $k$ highest degree vertices of $G_t^m(p,q)$. Then
		\begin{equation}%
		\frac{t^{p}}{f(t)} \leq \Delta_1 \leq t^{p}f(t) \, \, \, \text{and} \, \, \,
		\frac{t^{p}}{f(t)} \leq \Delta_i \leq \Delta_{i-1} - t^{p}f(t)   \nonumber
		\end{equation}%
	for $i = 1,2,\ldots,k$ whp.
\end{theorem}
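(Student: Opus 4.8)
The goal is to show that the $k$ largest degrees in $G_t^m(p,q)$ all scale like $t^p$, up to the slack factor $f(t)$, and are well-separated from one another. The factor $t^p$ (rather than $t^{p/2}$ as in Theorem~\ref{Theorem1}) is the signature of the new exponent: each node event now adds \emph{two} edges, effectively doubling the growth rate of high-degree vertices, which is exactly what Lemma~\ref{TGPALemma1} records through its modified rising factorial $(\,\cdot\,)^{(\overline{k})}$ and the bound $(t/s')^{pk}$. The plan is to follow the three-tier partition argument from the proof of Theorem~\ref{Theorem1} verbatim in structure, replacing the GPA moment and probability estimates with their TGPA analogues from Lemmas~\ref{TGPALemma1} and~\ref{TGPALemma2}.

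\textbf{Key steps.} First I would fix the thresholds $t_0 = \log\log\log f(t)$ and $t_1 = \log\log f(t)$ and partition the vertices into those born before $t_0$, those born in $[t_0,t_1]$, and those born after $t_1$. Then I would prove the five claims paralleling Claims~\ref{Claim1}--\ref{Claim5}. (i) The supernode of all vertices added before $t_0$ has degree at least $t_0^{1/3} t^{p}$ whp: apply Lemma~\ref{TGPALemma2} with $l=1$ and $d=4t_0$ (since each of the $t_0$ early steps can contribute degree up to $4m$, and here the relevant total is $2d$ with the doubled edge count), then bound the resulting sum of binomial-times-geometric terms using $1/x!\le e^x/x^x$ exactly as before, checking the exponent of $(2t_0-1)$ still diverges. (ii) No vertex born after $t_1$ exceeds $t_0^{-2} t^{p}$ whp: use Markov on the $l$\textsuperscript{th} modified factorial moment, insert Lemma~\ref{TGPALemma1} to get $\sum_{s=t_1}^t t_0^{2l} t^{-lp}(4m)^{(\overline{l})} 2^{lp}(t/s)^{lp}$, convert to an integral $\int x^{-lp}\,dx$, and choose $l>1/p$ (the analogue of $l>2/p$, adjusted because the degree scale is now $t^p$) so the integral is dominated by its lower endpoint and vanishes. (iii) No vertex born before $t_1$ exceeds $t_0^{1/6} t^{p}$ whp, by the same technique. (iv) The $k$ top vertices are born before $t_1$ with $t_0^{-1} t^{p}\le \Delta_i\le t_0^{1/6} t^{p}$, argued by contradiction: if some top-$k$ degree fell below $t_0^{-1}t^p$, summing the at most $t_0/m$ early-vertex degrees against the lower bound from (i) yields a contradiction with the $t_0^{1/3}t^p$ mass. (v) Consecutive top degrees differ by at least $t^{p}/f(t)$: condition on $\overline{A}_3$ from (iii), define $p_{l,s_1,s_2}$ as the probability that two early vertices have degrees within $t^p/f(t)$, bound it via the two-supernode ($l=2$) case of Lemma~\ref{TGPALemma2}, and sum over $O(t_1^2)$ pairs and $O(t^p/f(t))$ gaps to get a $\mathrm{poly}(\log f(t))/f(t)$ bound that vanishes.

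\textbf{Main obstacle.} The substantive difficulty is \emph{not} the bookkeeping of replacing $p/2$ by $p$, but rather the fact flagged in the remark preceding Lemma~\ref{TGPALemma1}: because a node event adds two correlated edges (the edge to $u$ and the triangle-closing edge to a neighbor $w$), the degree increments of distinct supernodes are \emph{not independent}, and a single step can raise two supernodes' degrees simultaneously. This means the clean product-over-increment-times formula underlying Lemma~\ref{Lemma2} does not transfer mechanically; one must verify that Lemma~\ref{TGPALemma2} (which already absorbs this dependence into its constants $3+2t_0$ and $19pr/t^p$) is being applied with the correct value of $d$ in each claim, accounting for the worst-case of both edges landing on the tracked set. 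The separation claim (v) is where this bites hardest, since the joint event for two supernodes must be controlled despite the coupling; I expect this to require care in reading off the $l=2$ instance of Lemma~\ref{TGPALemma2} and confirming that the exponential error term stays $o(1)$ once $d=O(t_1)$ and $r=O(t^p)$ are substituted. Everything else is a faithful exponent-substitution of the GPA proof.
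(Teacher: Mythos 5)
Your proposal matches the paper's approach: the paper in fact omits the proof of Theorem~\ref{TGPATheorem1} entirely, stating only that it follows the techniques of Section~\ref{sec:GPAResults} with the complication that two edges are added per step, and your plan is exactly that adaptation --- the same three-tier partition ($t_0 = \log\log\log f(t)$, $t_1 = \log\log f(t)$), the same five claims, with the correct substitutions $t^{p/2}\to t^{p}$, $d=2t_0\to 4t_0$, and $l>2/p\to l>1/p$, using Lemmas~\ref{TGPALemma1} and~\ref{TGPALemma2} in place of Lemmas~\ref{Lemma1} and~\ref{Lemma2}. You also correctly identify the correlated-edge dependence as the key new difficulty and correctly observe that it is already absorbed into the constants of Lemma~\ref{TGPALemma2}, which is precisely the point the paper makes in the remark opening Section~\ref{sec:SpectraTGPA}.
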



\begin{theorem}
	\label{TGPATheorem2}
	Let $k$ be a fixed integer, and let $f(t)$ be a function with $f(t) \rightarrow \infty$ as $t \rightarrow \infty$. Let $\lambda_1 \geq \lambda_2 \geq \ldots \geq \lambda_k$ be the $k$ largest eigenvalues of the adjacency matrix of $G_t^m(p,q)$. The for $i = 1, \ldots, k$, we have $\lambda_i = (1 + o(1))\Delta_i^{1/2}$, where $\Delta_i$ is the $i$\nth largest degree.
\end{theorem}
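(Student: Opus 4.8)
The plan is to follow the proof of Theorem~\ref{Theorem2} essentially verbatim, adapting only the constants and thresholds to reflect that in the TGPA$(p,q)$ model degrees grow like $t^{p}$ rather than $t^{p/2}$ (Theorem~\ref{TGPATheorem1}), and that each node event now inserts \emph{two} edges. Concretely, I would exhibit a star forest $F \subseteq G = G_t^m(p,q)$ whose hubs are the $k$ maximum-degree vertices, show that the residual graph $H = G \setminus F$ has a negligibly small spectral radius, and then transfer the top eigenvalues from $F$ to $G$ by Weyl's inequality together with Rayleigh's principle exactly as at the end of the proof of Theorem~\ref{Theorem2}. Since a star $K_{1,d}$ has $\lambda_1 = d^{1/2}$, once I establish $\Delta_i(F) = (1-o(1))\Delta_i(G)$ and $\lambda_1(\mA_H) = o(\lambda_k(\mA_F))$, the claimed $\lambda_i = (1+o(1))\Delta_i^{1/2}$ follows immediately.

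First I would partition the vertices by arrival time into sets $S_1, S_2, S_3$ (vertices added before $t_1$, between $t_1$ and $t_2$, and after $t_2$), re-tuning the GPA thresholds $t_1 = t^{1/8}$, $t_2 = t^{9/16}$ so that the resulting error exponents still separate under the faster $t^{p}$ growth. The degree-concentration bound of Claim~\ref{Claim6} carries over with $p/2$ replaced by $p$: applying Markov's inequality to a high modified-factorial moment and invoking Lemma~\ref{TGPALemma1} in place of Lemma~\ref{Lemma1}, with high probability every vertex added at time $r$ satisfies $d_s(v) \le s^{p+\varepsilon} r^{-p}$ for all $f(t) \le s \le t$. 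Theorem~\ref{TGPATheorem1} then certifies that the $k$ highest-degree vertices lie in $S_1$ and each has degree $\Theta(t^{p})$, so they are the prospective hubs of $F$.

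The main obstacle is the triangle-closing edge. In TGPA every preferential-attachment vertex has degree two, and its second edge $(v_t, w)$ lands on a \emph{neighbor} $w$ of the first target $u$, so the two edges are correlated rather than independent (precisely the dependence flagged in the discussion of Lemma~\ref{TGPALemma2}). For $F$ to be a genuine star forest I must discard the set $S_3'$ of vertices of $S_3$ sending both of their edges into the hub set $S_1$, and then build $F$ from the hub--leaf edges between $S_1$ and $S_3 \setminus S_3'$. Bounding $|S_3'|$ is the delicate step: I would exploit the identity noted after the model definition, namely that the triangle edge also lands marginally with the preferential-attachment law $\Pr[w = v_i] = d_t(w)/(4t-2)$, so each of the two edges individually hits $S_1$ with probability $O(t_1 s^{p+\varepsilon}/s)$; combining these (via a conditional bound on $\Pr[w \in S_1 \mid u \in S_1]$, using that a hub's neighborhood is dominated by later low-degree leaves rather than by the tiny set $S_1$) gives the squared decay needed to show $|S_3'| = o(t^{p})$ whp, mirroring Claim~\ref{Claim7} but with the corrected exponents. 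With this in hand the analog of Claim~\ref{Claim8} is routine: the only hub edges missing from $F$ are those added before $t_2$ and those incident to $S_3'$, both of order $o(t^{p}/\log t)$, so $\Delta_i(F) = (1-o(1))\Delta_i(G)$ and hence $\lambda_i(F) = (1-o(1))\Delta_i^{1/2}$.

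Finally I would bound $\lambda_1(\mA_H)$ as in Claim~\ref{Claim9}, writing $\mA_H$ in the $3\times 3$ block form over $S_1, S_2, S_3$ and summing the spectral radii of the six blocks. Each diagonal block $\mH_i$ is bounded by its maximum degree through the rescaled Claim~\ref{Claim6}; the off-diagonal blocks $\mH_{12}, \mH_{23}$ are star forests bounded by the square root of their maximum degree, and $\mH_{13}$ is controlled by the $|S_3'|$ estimate above. The contraction argument $\mA_G = \mC^\top \mA_{G'} \mC$ with $\vx^\top\vx \le \vy^\top\vy \le m\,\vx^\top\vx$ handles the case $m>1$ unchanged. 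This yields $\lambda_1(\mA_H) = o(t^{p/2}) = o(\lambda_k(\mA_F))$, so by the perturbation inequality for symmetric matrices and Rayleigh's principle $\lambda_i(\mA_G) = (1\pm o(1))\lambda_i(\mA_F) = (1+o(1))\Delta_i^{1/2}$, completing the proof. I expect the only nontrivial new work beyond transcription of constants to be the correlated-edge bound on $|S_3'|$ and the matching bound on the $\mH_{13}$ block.
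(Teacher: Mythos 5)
Your proposal takes essentially the same approach as the paper: indeed, the paper gives no detailed proof of Theorem~\ref{TGPATheorem2}, stating only that it follows the proof techniques of Section~\ref{sec:GPAResults} (star forest plus residual spectral-radius bound plus Weyl/Rayleigh), with the key difference being the two correlated edges added per node event. Your adaptation---swapping in Lemma~\ref{TGPALemma1} and Theorem~\ref{TGPATheorem1}, re-tuning the thresholds for the $t^{p}$ scaling, and isolating the correlated-edge bound on $|S_3'|$ (and the corresponding $S_1$--$S_3'$ block of the residual) as the only genuinely new work---is exactly the intended argument, and is in fact more explicit than the paper's own treatment.
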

\section{Analysis of TGPA$(p_t,r_t,q_t)$}
\label{sec:DegreeTGPA}

Consider $TGPA(p_t, r_t, q_t)$, which was described in Section~\ref{sec:v2}. The parameters $p_t, r_t, q_t$ can change over time, though we will restrict the ways in which the parameters can evolve in Section~\ref{sec:pltgpa}.

\subsection{Recursive relation for $m_{k,t}$}
Recall that $m_{k,t}$ is the number of nodes at time $t$ with degree $k$. We wish to write down a relationship for $m_{k,t+1}$ in terms of $m_{k',t}$ for $k' \leq k$. Recall also that the number of edges at time $t$ is $e_t = e_0 + 2t$, and the total sum of degrees at any time $t$ is $2e_t$. Note that for this reason we need only focus on $m_{k,t}$ for $1 \leq k \leq 2e_t$. 

Let $\mathcal{F}_t$ denote the $\sigma$-algebra generated by the graphs $G_0, G_1, \ldots, G_t$ ($\mathcal{F}_t$ holds the history of events up until time $t$). Fix $k \geq 2$. Since $0 \leq d_{t+1}(v) - d_t(v) \leq 4$ for every node $v$ and time $t$, we have 
\begin{equation}%
\label{eqn:expected1}
\EX[m_{k,t+1} | \mathcal{F} ] = \sum_{ \{ v: k-4 \leq d_t(v) \leq k  \} } \PR[d_{t+1}(v) = k].
\end{equation}%
Recall $\gamma_t(v)$ from Equation~\eqref{eqn:gamma}. Denote $\theta_t(v)$ as 2 times the number of self loops in which $v$ is involved divided by $\sum_{w \in V_{t-1}} d_{t-1}(w)$. (i.e. the proportion of edges which are self loops on $v$). If $d_{t+1}(v) = 4$, then there are at most 5 possible values for $d_t(v)$ when $k \geq 4$:
\begin{enumerate}[(i)]
	\item $d_t(v) = k$. In this case there must have either been a node event not involving $v$ (this occurs with probability $p_{t+1}(1 - 2\gamma_{t+1}(v) + \theta_{t+1}(v))$), or a wedge event not involving $v$ (with probability $r_{t+1}(1 - \gamma_{t+1}(v))(1 - 2\gamma_{t+1}(v) + \theta_{t+1}(v))$), or a component event (with probability $q_{r+1}$).
	\item $d_t(v) = k-1$. In this case there must have either been a node event where $v$ is involved as the first node (probability $p_{t+1} \cdot \gamma_{t+1}(v) \cdot (1-\theta_{t+1}(v))$), or where $v$ is involved as the second node (probability $p_{t+1} (\gamma_{t+1}(v) - \theta_{t+1}(v))$), or a wedge event in which $v$ is involved as the first node (with probability $r_{t+1} ( \gamma_{t+1}(v) - \gamma_{t+1}(v)^2 - \theta_{t+1}(v) + \gamma_{t+1}(v) \cdot \theta_{t+1}(v))$) or as the third node (probability $r_{t+1}(1 - \gamma_{t+1}(v))(\gamma_{t+1}(v) - \theta_{t+1}(v))$).
	\item $d_t(v) = k-2$. In this case there must have either been a node event in which $v$ is picked as both nodes involved (with probability $p_{t+1} \cdot \theta_{t+1}(v)$) or there must have been a wedge event in which $v$ is involved as the second node (with probability $r_{t+1} \cdot \theta_{t+1}(v)(1 - \gamma_{t+1}(v))$) or as the first and third nodes (with probability $r_{t+1} \cdot \gamma_{t+1}(v)(1 - \gamma_{t+1}(v) + \theta_{t+1}(v))$).	
	\item $d_t(v) = k-3$. In this case there must have been a wedge event where $v$ was involved as the first and second nodes or there was a wedge event where $v$ was involved as the second and third nodes (these events occur in combination with probability $2r_{t+1}\gamma_{t+1}(v)(\gamma_{t+1}(v) - \theta_{t+1}(v)$ ) ). 
	\item $d_t(v) = k-4$. In this case there must have been a wedge event where $v$ is picked for all three wedges, which happens with probability $r_{t+1} \cdot \gamma_{t+1}(v) \cdot \theta_{t+1}(v)$
\end{enumerate}
Let $\alpha_{k,t} = k/(2e_t)$. Then for every $v$ such that $d_t(v) = i$, $\gamma_{t+1}(v) = \alpha_{i,t}$. Define
\begin{equation}%
\begin{aligned}
A_{k,t} &= p_{t+1 ,k} (1 - 2\alpha_{k,t} + \theta_{t+1}(v))+ r_{t+1} (1 - \alpha_{k,t})(1 - 2 \alpha_{k,t} + \theta_{t+1}(v))  + q_{t+1},\\
B_{k,t} &= 2 p_{t+1} (\alpha_{k,t} - \theta_{t+1}(v)) + 2r_{t+1}(1 - \alpha_{k,t})(\alpha_{k,t} - \theta_{t+1}(v)),  \\
C_{k,t} &= p_{t+1}\theta_{t+1}(v) + r_{t+1} (\alpha_{k,t} - \alpha_{k,t}^2 + \theta_{t+1}(v)),\\
D_{k,t} &= 2 r_{t+1} \alpha_{k,t}( \alpha_{k,t} - \theta_{t+1}(v)), \, \, \text{and} \, \, E_{k,t} = r_{t+1} \alpha_{k,t} \theta_{t+1}(v).\nonumber
\end{aligned}
\end{equation}%
Then $A_{k,t} + B_{k,t} + C_{k,t} + D_{k,t} + E_{k,t} = 1$
and $A_{k,t}, B_{k,t}, C_{k,t}, D_{k,t}, E_{k,t}$ $\geq 0$ for every $0 \leq k \leq 2e_t$. Also, by Equation~\eqref{eqn:expected1}, for every $k \geq 4$
\begin{equation}%
\label{eqn:expected2}
\begin{aligned}
\EX[m_{k,t+1} | \mathcal{F}] &= m_{k,t} A_{k,t} + m_{k-1, t} B_{k-1,t} + m_{k-2,t} C_{k-2,t} \\& \hspace{0.3cm} +  m_{k-3,t} D_{k-3,t} + m_{k-4,t} E_{k-4,t}.
\end{aligned}
\end{equation}%
And for remaining values of $k$ we have
\begin{equation}%
\begin{aligned}
\label{eqn:expected3}
\EX[m_{3,t+1} | \mathcal{F}] &= m_{3,t} A_{3,t} + m_{2,t} B_{2,t} + m_{1,t}C_{1,t} \\
\EX[m_{2,t+1} | \mathcal{F}] &= m_{2,t} A_{2,t} + m_{1,t} B_{1,t} + p_{t+1} + q_{t+1} \\
\EX[m_{1,t+1} | \mathcal{F}] &= m_{1,t} A_{1,t} + 2q_{t+1}.
\end{aligned}
\end{equation}%
Define 
\begin{equation}%
X_{k,t} = 
 \begin{cases} 
m_{k-1, t} B_{k-1,t} + m_{k-2,t} C_{k-2,t}+  m_{k-3,t} D_{k-3,t} + m_{k-4,t} E_{k-4,t}   & k \geq 4 \\
m_{2,t} B_{2,t} + m_{1,t}C_{1,t} & k = 3 \\
 m_{1,t} B_{1,t} + p_{t+1} + q_{t+1} & k = 2 \\
2q_{t+1} & k = 1\\
\end{cases}
\end{equation}%
Then Equations~\eqref{eqn:expected2} and~\eqref{eqn:expected3} can be re-written as
\begin{equation}%
\begin{aligned}
\label{eqn:mk}
&\EX[m_{k,t+1}] = \EX[m_{k,t}] \cdot A_{k,t} + \EX[X_{k,t}]
\end{aligned}
\end{equation}%

\subsection{Degree Power-law in TGPA}
\label{sec:pltgpa}
The following lemma is presented in~\citet{avin2017improved} and is a quick generalization of a result in~\citet{chung2006complex}.
\begin{lemma}[\cite{avin2017improved}]
	\label{lemma:Avin}
Suppose that a sequence satisfies the recurrence relation
$a_{t+1} = (1- b_t/(t+t_1) ) a_t + c_t$
	for $t \geq t_0$. Furthermore, let $\{s_t \}$ be a sequence of real numbers with $\displaystyle \lim_{t\rightarrow \infty} s_t/s_{t+1} = 1$, $d_t = t(1-s_t/s_{t+1})$, $\displaystyle \lim_{t\rightarrow \infty} b_t = b$, $\displaystyle \lim_{t\rightarrow \infty} c_t \cdot t/s_t = c$, $\displaystyle \lim_{t\rightarrow \infty} d_t = d$, and $b+d > 1$. Then $\displaystyle \lim_{t\rightarrow \infty} a_t/s_t$ exists and $\displaystyle \lim_{t\rightarrow \infty} a_t / s_t = c/(b+d)$. 
\end{lemma}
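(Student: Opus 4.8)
The plan is to pass to the normalized sequence $x_t = a_t/s_t$ and to show it converges to $c/(b+d)$. First I would divide the recurrence by $s_{t+1}$. Rewriting the hypothesis $d_t = t(1-s_t/s_{t+1})$ as $s_t/s_{t+1} = 1 - d_t/t$, the recurrence becomes
\[
x_{t+1} = \left(1 - \frac{b_t}{t+t_1}\right)\left(1 - \frac{d_t}{t}\right) x_t + \frac{c_t}{s_{t+1}}.
\]
This is the central reduction: it converts the claim about $a_t/s_t$ into a claim about a scalar recurrence whose two coefficients I can control using the three limit hypotheses.

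Second, I would extract the leading-order behaviour of the coefficients. Denoting the multiplier by $u_t = (1-b_t/(t+t_1))(1-d_t/t)$, expanding and multiplying by $t$ gives $t(1-u_t) = t b_t/(t+t_1) + d_t - b_t d_t/(t+t_1)$, which tends to $b+d$ since $b_t\to b$, $d_t\to d$, and $t/(t+t_1)\to 1$; hence the multiplier has the form $1 - \beta_t/t$ with $\beta_t \to b+d$. For the additive term I would write $c_t/s_{t+1} = (c_t\, t/s_t)(s_t/s_{t+1})(1/t)$ and use $c_t\, t/s_t \to c$ together with $s_t/s_{t+1}\to 1$ to get $t\,(c_t/s_{t+1})\to c$, so the additive term is $\gamma_t/t$ with $\gamma_t\to c$. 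Thus $x_{t+1} = (1-\beta_t/t)x_t + \gamma_t/t$ with $\beta_t\to b+d$ and $\gamma_t\to c$.

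Third, I would linearize around the conjectured limit $L = c/(b+d)$ by setting $y_t = x_t - L$. A direct substitution yields $y_{t+1} = (1-\beta_t/t)y_t + (\gamma_t - \beta_t L)/t$, and since $\gamma_t - \beta_t L \to c-(b+d)L = 0$, the forcing term is $\varepsilon_t/t$ with $\varepsilon_t\to 0$. It then remains to prove that any sequence satisfying $y_{t+1}=(1-\beta_t/t)y_t + \varepsilon_t/t$ with $\beta_t\to b+d$ and $\varepsilon_t\to 0$ tends to $0$, which immediately gives $a_t/s_t\to c/(b+d)$.

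This final step is the main obstacle, and it is where the hypothesis $b+d>1$ (in particular $b+d>0$) is essential. I would solve the recurrence by discrete variation of parameters,
\[
y_t = \Big(\prod_{s=T}^{t-1}\big(1-\tfrac{\beta_s}{s}\big)\Big) y_T + \sum_{j=T}^{t-1}\Big(\prod_{s=j+1}^{t-1}\big(1-\tfrac{\beta_s}{s}\big)\Big)\frac{\varepsilon_j}{j},
\]
valid once $T$ is large enough that $0<\beta_s/s<1$. Because $\beta_s\to b+d>0$, the partial products behave like $\exp(-\sum \beta_s/s)$ and decay polynomially, so the homogeneous term vanishes and the product weights in the sum are comparable to $(j/t)^{b+d}$. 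A Toeplitz/Cesàro estimate then shows that this weighted average of the $\varepsilon_j$ (which tend to $0$) also tends to $0$, giving $y_t\to 0$. The only delicate points are making the product asymptotics uniform and checking positivity of the factors for large $t$; both follow from $\beta_s$ being bounded and bounded away from $0$.
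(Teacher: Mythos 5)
Your proof is correct, but note that the paper itself never proves this lemma: it is imported verbatim from \citet{avin2017improved} (stated there as a generalization of a recurrence lemma of \citet{chung2006complex}), so there is no in-paper argument to compare against, and what you have produced is a self-contained proof where the paper offers only a citation. Your route is the natural one and all the steps check: dividing by $s_{t+1}$ and using $s_t/s_{t+1} = 1 - d_t/t$ correctly yields $x_{t+1} = (1-\beta_t/t)x_t + \gamma_t/t$ with $\beta_t = t(1-u_t) \to b+d$ and $\gamma_t = t\,c_t/s_{t+1} = (c_t t/s_t)(s_t/s_{t+1}) \to c$; the linearization $y_t = x_t - L$ with $L = c/(b+d)$ gives forcing $(\gamma_t - \beta_t L)/t$ whose numerator tends to $c - (b+d)L = 0$; and the closing discrete variation-of-parameters plus Toeplitz estimate is sound, since for large $T$ all factors $1-\beta_s/s$ lie in $(0,1)$, $\prod_{s=j+1}^{t-1}(1-\beta_s/s) \leq ((j+1)/t)^{B-\eta}$ with $B = b+d$, and splitting $\sum_j j^{B-\eta-1}|\varepsilon_j|$ at an index beyond which $|\varepsilon_j|<\delta$ gives $\limsup_t |y_t| = O(\delta)$ for every $\delta>0$. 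Two remarks. First, your argument in fact only uses $b+d>0$, strictly weaker than the stated hypothesis $b+d>1$; this is consistent with the Chung--Lu special case ($s_t = t$, hence $d=1$, and the hypothesis there is $b>0$), so invoking the weaker consequence is legitimate and even slightly sharpens the statement. Second, you implicitly assume $s_t \neq 0$ (say $s_t>0$) for large $t$ so that division by $s_{t+1}$ and the quantity $a_t/s_t$ make sense; this is implicit in the lemma statement itself (it is applied with $s_t = \EX[n_t]$), so it is not a gap, but it is worth stating when you fix the threshold $T$.
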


The following theorem and corollary prove that $TGPA(p_t,r_t,q_t)$ has a power-law in the degree distribution, which we can analyze.
\begin{theorem}
	\label{thm:M}
	Consider TGPA$(p_t,r_t,q_t)$. Let $y_t = p_t + 3q_t$. Assume that $\displaystyle \lim_{t\rightarrow \infty} y_t = y < 3$, $\sum_{ t=1}^{\infty} y_t = \infty$, and $ \displaystyle \lim_{t\rightarrow \infty} t \cdot y_{t+1}/$ $\sum_{j =1}^t y_j = \Gamma > 0$. Then letting $ \beta = 1 + 2 \Gamma/(3-y)$, the limit $\displaystyle M_k = \lim_{t\rightarrow \infty} \EX[m_{k,t}] / \EX[n_t] $ exists for every $k \geq 1$ and 
	\[ M_k = \frac{ \Gamma}{ \Gamma + 3/2 - y/2} \prod_{j =1}^{k-1} \frac{j}{j+ \beta}.  \]
	\vspace*{-.5\baselineskip}
\end{theorem}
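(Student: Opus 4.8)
The plan is to solve the recurrence \eqref{eqn:mk} for each fixed $k$ by invoking Lemma~\ref{lemma:Avin}, taking the normalizing sequence to be $s_t = \EX[n_t]$ and inducting on $k$. The first step is to pin down $\EX[n_t]$. A node event adds one vertex, a wedge event adds none, and a component event adds three, so the expected number of vertices created at step $t$ is exactly $p_t + 3q_t = y_t$; hence $\EX[n_t] = n_0 + \sum_{j=1}^t y_j$, which diverges by the hypothesis $\sum_t y_t = \infty$. With $s_t = \EX[n_t]$ one has $d_t = t(1 - s_t/s_{t+1}) = t\,y_{t+1}/s_{t+1}$, so the assumption $\lim_t t\,y_{t+1}/\sum_{j\le t} y_j = \Gamma$ identifies the lemma's parameter $d$ as $\Gamma$.

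Next I would match \eqref{eqn:mk} to the form $a_{t+1} = (1 - b_t/(t+t_1))a_t + c_t$ with $a_t = \EX[m_{k,t}]$. The multiplicative factor is $A_{k,t}$: using $\alpha_{k,t} = k/(2e_t)$, $e_t = e_0 + 2t$, the identity $2p + 3r = 3 - y$, and $p_{t+1}+r_{t+1}+q_{t+1}=1$, a first-order expansion gives $A_{k,t} = 1 - (3-y_{t+1})\alpha_{k,t} + o(1/t)$, which is of the required form and produces a limit $b$ proportional to $k(3-y)$. A key point in this expansion is that the self-loop proportions $\theta_{t+1}(v)$ and the quadratic $\alpha_{k,t}^2$ terms appearing in $A_{k,t}, B_{k,t}, \dots$ are asymptotically negligible: the expected number of self-loops created grows only logarithmically, so in aggregate these terms are $o(1/t)$ against the $\Theta(1/t)$ leading terms.

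The additive term is $c_t = \EX[X_{k,t}]$, which I would evaluate using the inductive hypothesis that $M_{k'}$ exists for all $k' < k$. For $k \ge 3$ the dominant contributions come from $\EX[m_{k-1,t}]B_{k-1,t}$ (degree rising by one, from a node event or from being a leaf of a wedge) and $\EX[m_{k-2,t}]C_{k-2,t}$ (degree rising by two, from being the center of a wedge), the $D$ and $E$ terms being $O(1/t^2)$ and hence irrelevant after multiplying by $t/s_t$. Since the relevant products $tB_{k-1,t}$ and $tC_{k-2,t}$ converge, $\lim_t c_t\,t/s_t$ exists and is determined by the previously computed $M_{k'}$; the base cases $k=1,2$ instead draw $c$ from the source terms in \eqref{eqn:expected3}, with $k=1$ producing $M_1 = \Gamma/(\Gamma + 3/2 - y/2)$. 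Lemma~\ref{lemma:Avin} then gives $M_k = c/(b+d)$ for every $k$, and substituting the computed $b$ and $d=\Gamma$ turns this into a recurrence in $k$ whose solution is the claimed product, with $\beta = 1 + 2\Gamma/(3-y)$.

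The main obstacle is controlling these asymptotic coefficients rigorously. Verifying the lemma's hypotheses --- that $\lim_t c_t\,t/s_t$ exists and that $b + d > 1$ --- rests entirely on the three assumptions placed on $y_t$, and one must check that the convergence $t\,y_{t+1}/\sum_{j\le t} y_j \to \Gamma$ interacts correctly with the separately time-varying $p_t, r_t, q_t$. More delicate is bounding the error terms in the expansions of $A_{k,t},\dots,E_{k,t}$: one must show that the self-loop terms and the correlations among the supernodes (flagged in the remarks preceding Lemma~\ref{TGPALemma2}) contribute only lower order, so that the clean limits $b$, $c$, $d$ are genuinely attained. Once these limits are established, the induction and the passage to the product formula are routine.
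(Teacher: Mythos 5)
Your route is the same one the paper takes: normalize by $s_t = \EX[n_t]$, extract $d = \Gamma$ from the hypothesis on $y_t$, and induct on $k$ through Lemma~\ref{lemma:Avin}. The genuine gaps sit at exactly the two points you declare routine, and they are not cured by more care, because your (correct) bookkeeping is actually incompatible with the formula being proved. Base case: by Equation~\eqref{eqn:expected3} the source term for $m_{1,t}$ is $c_t = 2q_{t+1}$, so $c = \lim_{t\to\infty} 2q_{t+1}t/s_t = 2\Gamma \lim_{t\to\infty} (q_t/y_t)$ whenever the latter limit exists --- and nothing in the hypotheses, which constrain only $y_t = p_t + 3q_t$, forces it to exist. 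Moreover $2q_t/y_t = 2q_t/(p_t+3q_t) \leq 2/3$, so at best $c \leq 2\Gamma/3 < \Gamma$, and your setup cannot output $M_1 = \Gamma/(\Gamma + 3/2 - y/2)$. The paper reaches that value only by taking $c_t = y_{t+1}$ in the base case, an identification inconsistent with its own Equation~\eqref{eqn:expected3}; it is inherited from GPA, where a node event creates a degree-one vertex so the degree-one source and the vertex-growth rate coincide, whereas in TGPA a node event creates a degree-two vertex and they do not.

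Inductive step: you rightly keep both leading contributions to $c_t$, namely $B_{k-1,t}\EX[m_{k-1,t}]$ and $C_{k-2,t}\EX[m_{k-2,t}]$ (wedge centers gain degree two, and $t\,C_{k-2,t}$ tends to a positive constant whenever the wedge probability does not vanish). But then the limiting relation is a three-term recurrence $M_k(bk+\Gamma) = \alpha_1(k-1)M_{k-1} + \alpha_2(k-2)M_{k-2}$ with $\alpha_2 > 0$, and the claimed product does \emph{not} solve it: substituting $M_k \propto \prod_{j<k} j/(j+\beta)$ leaves an unmatched term $\alpha_2\,\beta(\beta-1)/(k-1)$, which is nonzero since $\beta > 1$. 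The product form requires $c$ to collapse to a single term $(k-1)\kappa M_{k-1}$ with $\kappa$ equal to the per-$k$ constant in $b$; that collapse is precisely what the paper asserts (its inductive step sets $c = (k-1)(3/2 - y/2)M_{k-1}$) without derivation, and it is the entire content of the theorem rather than a routine endgame. Two subsidiary holes of the same type: you never pin the constant in $b$, yet it enters $\beta$ directly --- since $e_t = e_0 + 2t$ and $\alpha_{k,t} = k/(2e_t)$, the lemma's normalization gives $t(1-A_{k,t}) \to k(3-y)/4$, not the $k(3-y)/2$ the paper uses --- and the convergence of $t B_{k-1,t}$ and $t C_{k-2,t}$ needs limits of $q_t$ and $r_t$ individually, which the assumptions on $y_t$ alone do not supply (you flag this as an obstacle, but it cannot be verified from the stated hypotheses). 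As written, your argument does not close.
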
%
\begin{proof}
	This proof will be an induction on $k$. For $k=1$ we use Lemma~\ref{lemma:Avin} setting $(t_1, s_t, a_t, b_t, c_t) = (e_0, \EX[n_t], \EX[m_{1,t}], e_t(1-A_{1,t}), y_{t+1} ).$ Using Equation~\eqref{eqn:mk}, this gives the limits $b = 3/2 - y/2$, and $c = d = \Gamma$, which concludes the base case. Now assume the Theorem holds for $k-1$, we now prove it for $k$. Again use Lemma~\ref{lemma:Avin}, this time with $(t_1, s_t, a_t, b_t, c_t) = (e_0, \EX[n_t], \EX[m_{k,t}], B_{k-1,t} \EX[m_{k-1,t}] + C_{k-1,t} \EX[m_{k-2,t}] + D_{k-3,t} \EX[m_{k-3,t}] + E_{k-4,t} \EX[m_{k-4,t}] )$. Then we get $d = \Gamma$, $b = k\cdot(3/2 - y/2)$, and using the inductive hypothesis, 
	\[ c = \lim_{t\rightarrow \infty} \frac{c_t \cdot t}{s_t} = (k-1) \left(\frac{3}{2} - \frac{y}{2} \right) M_{k-1} . \]
		Therefore $M_{k}$ exists and 
		\[  M_k = \frac{ (k-1)(3/2 - y/2) M_{k-1} }{k (3/2 - y/2) + \Gamma} = \frac{k-1}{k-1 + \beta} M_{k-1} . \] 
\end{proof}

The proof of the following corollary follows exactly from~\citet{avin2017improved}.
\begin{corollary}
	\label{Avin2}
	Under the assumptions in Theorem \ref{thm:M}, $M_k$ is proportional to $k^{-\beta}$.
\end{corollary}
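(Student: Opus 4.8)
The plan is to observe that the finite product appearing in Theorem~\ref{thm:M} telescopes into a ratio of factorials, identify this as a ratio of Euler gamma functions, and then extract the power-law exponent from the standard large-argument asymptotics of that ratio. Throughout, I write $\Gamma(\cdot)$ (with an argument) for the Euler gamma function, to be distinguished from the scalar constant $\Gamma$ of Theorem~\ref{thm:M}, which carries no argument. Since $\beta = 1 + 2\Gamma/(3-y)$ is, under the hypotheses $y<3$ and $\Gamma>0$, a well-defined positive real number, all the gamma-function values below are finite and nonzero.

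First I would rewrite the product in closed form. The numerator is $\prod_{j=1}^{k-1} j = (k-1)! = \Gamma(k)$, while the denominator is $\prod_{j=1}^{k-1}(j+\beta) = (1+\beta)(2+\beta)\cdots(k-1+\beta) = \Gamma(k+\beta)/\Gamma(1+\beta)$. Hence
\[
\prod_{j=1}^{k-1} \frac{j}{j+\beta} = \frac{\Gamma(1+\beta)\,\Gamma(k)}{\Gamma(k+\beta)},
\]
so that, collecting the $k$-independent prefactor,
\[
M_k = \underbrace{\frac{\Gamma}{\Gamma + 3/2 - y/2}\,\Gamma(1+\beta)}_{=: C}\cdot \frac{\Gamma(k)}{\Gamma(k+\beta)}.
\]
The constant $C$ does not depend on $k$, so all of the $k$-dependence is carried by the ratio $\Gamma(k)/\Gamma(k+\beta)$.

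The next step is to apply the classical asymptotic for ratios of gamma functions, $\Gamma(k)/\Gamma(k+\beta) = k^{-\beta}\bigl(1 + O(1/k)\bigr)$ as $k\to\infty$, which follows from Stirling's formula (equivalently, from $\lim_{k\to\infty}\Gamma(k+\beta)/\bigl(\Gamma(k)\,k^{\beta}\bigr) = 1$). Substituting gives $M_k = C\,k^{-\beta}\bigl(1+o(1)\bigr)$, i.e. $M_k$ is proportional to $k^{-\beta}$, which is the claim. I do not expect any genuine obstacle here: the argument is a purely analytic consequence of the closed form already furnished by Theorem~\ref{thm:M}, and this is exactly why the statement can be asserted to follow directly from the corresponding step in~\citet{avin2017improved}. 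The only point worth checking carefully is that the hypotheses of Theorem~\ref{thm:M} guarantee $\beta>0$ (so the power-law exponent is meaningful and the gamma-ratio asymptotic applies with the stated sign), which is immediate from $\Gamma>0$ and $3-y>0$.
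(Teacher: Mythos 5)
Your proof is correct and is essentially the argument the paper intends: the paper simply defers to \citet{avin2017improved}, whose proof of this step is precisely the telescoping of $\prod_{j=1}^{k-1} j/(j+\beta)$ into the gamma-function ratio $\Gamma(1+\beta)\,\Gamma(k)/\Gamma(k+\beta)$ followed by the Stirling asymptotic $\Gamma(k)/\Gamma(k+\beta) \sim k^{-\beta}$. Your added check that $\beta > 0$ (indeed $\beta > 1$, since $\Gamma > 0$ and $y < 3$) is the right hypothesis to verify, so nothing is missing.
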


Finally, we can state which power-law exponents are obtainable.
\begin{lemma}
	For any $x \in (1,\infty)$, there exists a choice of $p_t, r_t, q_t$ such that in $TGPA(p_t,r_t,q_t)$ the resulting network follows a power-law in the degree distribution with exponent $\beta = x$.
\end{lemma}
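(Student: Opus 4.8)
The plan is to treat Theorem~\ref{thm:M} and Corollary~\ref{Avin2} as a black box that already delivers a power-law degree distribution with exponent $\beta = 1 + 2\Gamma/(3-y)$, and then to show that the admissible pairs $(y,\Gamma)$ sweep $\beta$ across all of $(1,\infty)$. Recall that $y = \lim_{t\to\infty} y_t$ (with $y_t = p_t + 3q_t$) must satisfy $y<3$, that $\sum_t y_t = \infty$, and that $\Gamma = \lim_{t\to\infty} t\,y_{t+1}/\sum_{j=1}^t y_j$ must be positive. Since $3-y>0$ and $\Gamma>0$, the formula already forces $\beta>1$, so the only thing to verify is that every target value is attained. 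First I would record the elementary but decisive observation that $\Gamma$ is not free once $y>0$: if $y_t\to y>0$ then $\sum_{j\le t} y_j \sim y t$ and $y_{t+1}\to y$, so $\Gamma=1$ is forced. This is what splits the argument into two regimes.

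In the first regime I would take the parameters constant in time, $p_t\equiv p$, $r_t\equiv r$, $q_t\equiv q$ with $p+r+q=1$. Then $y_t\equiv y = p+3q$ and, by the observation above, $\Gamma=1$, so $\beta = 1 + 2/(3-y)$. As $y$ ranges over $(0,3)$ this expression increases continuously from $5/3$ to $\infty$, so for any target $x>5/3$ I would set $y = 3 - 2/(x-1)\in(0,3)$. It then remains to exhibit nonnegative $p,q$ and $r=1-p-q$ realizing $p+3q=y$; a short check shows the interval of admissible $q$, namely $q\in\bigl(\max\{0,(y-1)/2\},\,y/3\bigr)$, is nonempty for every $y\in(0,3)$, so a valid, nondegenerate choice always exists.

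For the complementary range $x\in(1,5/3]$ the first regime is unavailable, because $\Gamma=1$ cannot push $\beta$ below $5/3$; here I would force $y=0$ and tune $\Gamma$ instead through the decay rate of $y_t$. Solving $x = 1 + 2\Gamma/3$ gives the target $\Gamma = \frac{3}{2}(x-1)\in(0,1]$. For $\Gamma\in(0,1)$ I would take $y_t = t^{-\delta}$ with $\delta = 1-\Gamma\in(0,1)$; then $\sum_{j\le t}y_j \sim t^{1-\delta}/(1-\delta)$ and $t\,y_{t+1}\sim t^{1-\delta}$, so $\Gamma = 1-\delta$ as required, while $y=0$ and $\sum_t y_t=\infty$ hold. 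The boundary case $\Gamma=1$, i.e.\ $x=5/3$, I would cover with a slowly varying choice such as $y_t = 1/\log(t+1)$, for which $\sum_{j\le t} y_j \sim t/\log t \sim t\,y_{t+1}$ gives $\Gamma=1$ with $y=0$. In each case $y_t\le 1$, so the split $p_t=y_t$, $q_t=0$, $r_t=1-y_t$ is a legitimate probability assignment, and $\EX[n_t]=\sum_{j\le t}y_j\to\infty$ keeps the model nondegenerate.

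The main obstacle is precisely the rigidity identified in the first step: one cannot realize the small exponents in $(1,5/3]$ by any parameter schedule converging to a positive limit, since that pins $\Gamma=1$. The crux of the proof is therefore the asymptotic bookkeeping of $\Gamma = \lim_t t\,y_{t+1}/\sum_{j\le t}y_j$ for the vanishing sequences $y_t\to 0$, confirming both that $\sum_t y_t$ still diverges (so Theorem~\ref{thm:M} applies) and that $\Gamma$ lands on the prescribed value in $(0,1]$; the remaining verifications that the $y_t$ decompose into admissible probabilities are routine.
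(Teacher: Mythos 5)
Your proposal is correct and takes essentially the same route as the paper: the identical three-case split (constant $y_t$ with $\Gamma=1$ for $x>5/3$, polynomially decaying $y_t=t^{-(5/2-3x/2)}$ for $x\in(1,5/3)$, and a $1/\log$ sequence for $x=5/3$), with the same computations of $\Gamma$ and $\beta=1+2\Gamma/(3-y)$. Your additions are only explicitness the paper leaves implicit -- that $y>0$ forces $\Gamma=1$, and that $y_t$ decomposes into admissible probabilities $(p_t,r_t,q_t)$ -- apart from the trivial quibble that your choice $y_t=1/\log(t+1)$ exceeds $1$ at $t=1$, which the paper's $1/\ln(t+2)$ avoids.
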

\begin{proof} 
We can use three separate cases:
\begin{enumerate}[(i)]
	\item For $x \in (5/3,\infty)$, setting $y_t = 3 - 2/(x-1)$ gives exponent
		 $\beta = 1 + 2/( 3 - (3 - 2/(x-1)) ) = x$.
	\item For $x \in (1,5/3)$, set $y_t = t^{3/2(x - 5/3)}$. Then
	\begin{equation}%
	\begin{aligned}
	\Gamma &= \lim_{t\rightarrow \infty} \frac{ y_{t+1} \cdot t }{ \sum_{j = 1}^t y_j} 
	= \lim_{t\rightarrow \infty} \frac{ (t+1)^{3/2(x - 5/3)} \cdot t }{ \sum_{j = 1}^t (j^{3/2(x-5/3)}) } = \lim_{t\rightarrow \infty} \frac{\cdot t^{3/2x - 3/2} }{\int_{j = 0}^t j^{3/2(x - 5/3)} dj } \\
	& = \lim_{t\rightarrow \infty} \frac{ (3/2 x - 3/2) t^{3/2x - 3/2} }{ j^{3/2x - 3/2} |_{j = 0}^t } = 3/2x - 3/2 \nonumber
	 \end{aligned}
	 \end{equation}%
	 Then $\beta = 1 + (2 \Gamma)/(3-y) = 1 + 2(3/2x - 3/2)/(3-0) = x$.
	 \item For $x = 5/3$, set $y_t = 1/\ln(t+2)$ for every $t$. Then we have
	 \begin{equation}%
	 \begin{aligned}
	 \Gamma = \lim_{t\rightarrow \infty} \frac{ y_{t+1} \cdot t }{ \sum_{j = 1}^t y_j} = \lim_{t\rightarrow \infty} \frac{ t / \ln(t + 3) }{\sum_{j =1}^t 1 / \ln(j+2) }
	 = \lim_{t\rightarrow \infty} \frac{ t / \ln(t+3) }{t / \ln{t} } = 1 \nonumber
	 \end{aligned}
	 \end{equation}%
Then TGPA$(p_t,r_t,q_t)$ follows a power law degree distribution with exponent $\beta = 1 + 2 \Gamma / (3-y) = 1 + 2/(3-0) = 5/3$. 
\end{enumerate}%
\end{proof}%

For a final analysis, we show that the component portion is necessary to obtain the full power-law exponent range $(1,\infty)$. Lemma~\ref{lemma:Avin3} comes directly from~\citet{avin2017improved}.

\begin{lemma}[~\cite{avin2017improved}]
	\label{lemma:Avin3}
	Assume $\displaystyle \lim_{t\rightarrow \infty}y_t = y$ and $\displaystyle \lim_{t\rightarrow \infty} y_{t+1} \cdot t/$ $\sum_{j =1}^t j_j = \Gamma$. Then for $y >0$ we have $\Gamma = 1$, and for $y = 0$ we have $\Gamma \leq 1$. 
\end{lemma}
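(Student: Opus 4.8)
The plan is to set $S_t = \sum_{j=1}^{t} y_j$ (the denominator appearing in the definition of $\Gamma$; note the index in the statement should read $y_j$, not $j_j$). Since $y_j = p_j + 3q_j \geq 0$, the sequence $S_t$ is nondecreasing, so it either converges to a finite limit or diverges to $+\infty$, and $\Gamma \geq 0$. I would prove the two conclusions by splitting on the value of $y$ and, within $y=0$, on whether $\sum_j y_j$ converges.

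First I would treat $y > 0$. Because $y_j \to y$, the Cesàro means converge, so $S_t/t \to y$. Writing $t\,y_{t+1}/S_t = y_{t+1}/(S_t/t)$ and using $y_{t+1} \to y > 0$ together with $S_t/t \to y > 0$, the ratio tends to $y/y = 1$; hence $\Gamma = 1$. This case is essentially immediate once the Cesàro limit $S_t/t \to y$ is in hand.

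Next I would treat $y = 0$, where I must only show $\Gamma \leq 1$. If $\sum_j y_j$ converges, say $S_t \to S_\infty < \infty$, then $\Gamma = S_\infty^{-1}\lim_t t\,y_{t+1}$, so the existence of a finite limit forces $t\,y_{t+1} \to \Gamma S_\infty$. If $\Gamma > 0$ this gives $y_{t+1} \sim \Gamma S_\infty/t$, making $\sum_j y_j$ diverge, a contradiction; hence $\Gamma = 0 \leq 1$. The remaining, and main, case is $S_t \to \infty$. Here I argue by contradiction: suppose $\Gamma > 1$ and fix $\Gamma'$ with $1 < \Gamma' < \Gamma$. By the definition of the limit, for all large $t$ we have $y_{t+1} \geq (\Gamma'/t)\,S_t$, so $S_{t+1} = S_t + y_{t+1} \geq S_t(1 + \Gamma'/t)$. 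Iterating from a threshold $t_0$ and using the asymptotic $\prod_{j=t_0}^{t-1}(1+\Gamma'/j) \sim c\,t^{\Gamma'}$ yields $S_t \geq c\,t^{\Gamma'}$ for large $t$, whence $y_{t+1} \geq (\Gamma'/t)\,S_t \geq c\Gamma'\,t^{\Gamma'-1} \to \infty$ because $\Gamma' > 1$. This contradicts $y_t \to 0$, so $\Gamma \leq 1$.

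The step I expect to be the main obstacle is this divergent sub-case of $y = 0$: converting the multiplicative recursion $S_{t+1} \geq S_t(1 + \Gamma'/t)$ into the polynomial lower bound $S_t \gtrsim t^{\Gamma'}$ through the product asymptotic (equivalently, summing $\log(1+\Gamma'/j) = \Gamma'/j + O(j^{-2})$ to get $\log S_t \geq \Gamma'\log t + O(1)$), and being careful that the ``for all large $t$'' thresholds coming from the limit definition of $\Gamma$ are applied consistently. The other branches are short once $S_t/t \to y$ (Cesàro) and the convergent-series bookkeeping are established.
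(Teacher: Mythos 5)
The paper does not actually contain a proof of this lemma: it is imported verbatim from \citet{avin2017improved} (``Lemma~\ref{lemma:Avin3} comes directly from\ldots''), so there is no in-paper argument to compare yours against. Judged on its own, your proof is correct and complete, and it supplies exactly what the citation leaves out. You correctly read $j_j$ as the typo it is (the denominator is $S_t=\sum_{j=1}^t y_j$), and your case analysis is sound: for $y>0$, the Ces\`{a}ro limit $S_t/t\to y$ immediately gives $t\,y_{t+1}/S_t = y_{t+1}/(S_t/t)\to y/y=1$; for $y=0$ with $\sum_j y_j<\infty$, the forced limit $t\,y_{t+1}\to\Gamma S_\infty$ with $\Gamma>0$ would make $y_{t+1}\sim \Gamma S_\infty/t$ non-summable, a contradiction, so $\Gamma=0$; and for $y=0$ with $S_t\to\infty$, the supposition $\Gamma>\Gamma'>1$ yields $S_{t+1}\ge S_t(1+\Gamma'/t)$, hence $\log S_t \ge \Gamma'\log t + O(1)$ (summing $\log(1+\Gamma'/j)=\Gamma'/j+O(j^{-2})$), hence $y_{t+1}\ge \Gamma' S_t/t \gtrsim t^{\Gamma'-1}\to\infty$, contradicting $y_t\to 0$. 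Two housekeeping points worth making explicit if this were written up: (i) the hypothesis that the limit $\Gamma$ exists implicitly requires $S_t>0$ for large $t$ (so $S_\infty>0$ in the convergent sub-case, and a valid starting value $S_{t_0}>0$ for the iteration); (ii) since $y_j = p_j+3q_j\ge 0$, one also gets $\Gamma\ge 0$ for free, which your convergent sub-case quietly uses when concluding $\Gamma=0$ rather than $\Gamma\le 0$. Neither is a gap, just wording.
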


\begin{corollary}
\label{corr:tpga-power-law}
	Consider TGPA($p_t,r_t,q_t$). Assume that $\displaystyle \lim_{t\rightarrow \infty} q_t = 0$, $\displaystyle \lim_{t\rightarrow \infty} y_t = y$, and $\displaystyle y_{t+1} t/$ $\sum_{j = 1}^t y_j = \Gamma > 0$. Then the resulting graph follows a power law degree distribution with exponent $\beta \in (1,3]$.
\end{corollary}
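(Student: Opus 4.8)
The plan is to derive the conclusion directly from the closed-form exponent $\beta = 1 + 2\Gamma/(3-y)$ supplied by Theorem~\ref{thm:M}, once I have checked that the corollary's hypotheses let me invoke that theorem together with Lemma~\ref{lemma:Avin3}. The first task is to pin down the range of $y$. Since $y_t = p_t + 3q_t$ and $\lim_{t\to\infty} q_t = 0$, the assumed convergence $\lim_{t\to\infty} y_t = y$ forces $\lim_{t\to\infty} p_t = y$ as well. Because $p_t$ is a probability, $p_t \in [0,1]$ for every $t$, so $y \in [0,1]$; in particular $y < 3$, which is precisely the hypothesis Theorem~\ref{thm:M} requires. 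I would also record that $\sum_t y_t = \infty$: if $y > 0$ this is immediate, and if $y = 0$ it follows from $\Gamma > 0$, since a convergent $\sum_j y_j$ would force $y_{t+1}\sim \Gamma\bigl(\sum_j y_j\bigr)/t$ and hence $\sum_t y_t = \infty$, a contradiction. With these checks done, Theorem~\ref{thm:M} applies and gives $\beta = 1 + 2\Gamma/(3-y)$.

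The second step bounds $\Gamma$ and then reduces to arithmetic. I invoke Lemma~\ref{lemma:Avin3}: if $y > 0$ then $\Gamma = 1$, while if $y = 0$ then $\Gamma \le 1$, so in all cases $0 < \Gamma \le 1$ (the positivity being a hypothesis). Combining with $0 \le y \le 1$, so that $2 \le 3-y \le 3$, the ratio $2\Gamma/(3-y)$ is strictly positive, giving $\beta > 1$, and is at most $2\cdot 1/(3-1) = 1$, giving $\beta \le 2 \le 3$. Hence $\beta \in (1,3]$; in fact the sharper bound $\beta \in (1,2]$ drops out, with the endpoint $2$ attained in the limit $y\to 1$, $\Gamma = 1$.

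Because the computation is short, the only real obstacle is bookkeeping on the hypotheses: making sure the corollary's assumptions genuinely supply everything Theorem~\ref{thm:M} and Lemma~\ref{lemma:Avin3} demand, most delicately the $y=0$ boundary case where $\Gamma$ is merely bounded rather than determined, and the implicit claim that $\sum_t y_t$ diverges. Once those are settled, the intended contrast with the full model is immediate: suppressing the component step by sending $q_t\to 0$ caps both $y$ and $\Gamma$ at $1$, which jointly cap $\beta$, whereas letting $q_t$ persist allows $y$ to approach $3$ and thereby drives $\beta$ across the entire range $(1,\infty)$.
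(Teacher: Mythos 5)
Your proof is correct and takes essentially the same route as the paper: invoke Theorem~\ref{thm:M} (via Corollary~\ref{Avin2}) for the formula $\beta = 1 + 2\Gamma/(3-y)$, use Lemma~\ref{lemma:Avin3} to get $\Gamma = 1$ when $y>0$ and $\Gamma \le 1$ when $y=0$, and finish by arithmetic using $y \in [0,1]$. Your extra bookkeeping---checking that $q_t \to 0$ forces $y \le 1 < 3$, that $\sum_t y_t$ diverges so Theorem~\ref{thm:M} genuinely applies, and the resulting sharper bound $\beta \in (1,2]$---only tightens steps the paper leaves implicit.
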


\begin{proof}
	By Corollary~\ref{Avin2}, TGPA$(p_t,r_t,q_t)$ follows a power-law in the degree distribution with exponent $\beta = 1 + 2\Gamma/(3-y) > 1$. By Lemma~\ref{lemma:Avin3}, for $0 < y \leq 1$, we have $\beta = 1 + 2/(3-y) \in (5/3,3]$ and for $y = 0$ we have $\beta = 1 + 2\Gamma/3 \leq 5/3$. 
\end{proof}
\section{significant Clustering coefficients}
\label{sec:clust}


\begin{table}
	\centering
	\caption{Clustering coefficients for 3 real-world networks, and generated models. TGPA is able to generate data with much larger clustering coefficients, compared to GPA.}
	\begin{tabularx}{\linewidth}{lZZZZZ}
	\toprule
	\text{Network name} & edges & global clust & local clust & HO global & HO local \\
	\midrule
	\bfseries{\emph{Auburn (18k vertices)}} & 974k & 0.137 & 0.223 & 0.107 & 0.172 \\
	TGPA(18k,0.987,10,150): & 640k & 0.25 & 0.22 & 0.118 & 0.03 \\
	GPA(18k,0.001,0.999,2): & 906k & 0.021 & 0.030 & 0.005 & 0.014 \\
	\addlinespace
	\bfseries{\emph{Berkeley (13k vertices)}} & 852k & 0.114 & 0.207 & 0.0876 & 0.156\\
	TGPA(13k,0.99, 10, 58) & 502k & 0.104 & 0.185 & 0.034 & 0.025\\
	GPA(13k,0.001,0.999,2) & 502k & 0.024 & 0.034 & 0.005 & 0.015\\
	\addlinespace
	\bfseries{\emph{Princeton (7k vertices)}} & 293k & 0.237 & 0.164 & 0.091 & 0.146 \\
	TGPA(7k,0.987,10,100): & 207k & 0.298 & 0.251 & 0.148 & 0.053 \\
	GPA(7k,0.001,0.999,2): & 255k & 0.038& 0.054& 0.009 & 0.025 \\
	\bottomrule
\end{tabularx}
\label{table:Clust}
\end{table}


We analyzed 3 networks from the Facebook 100 dataset~\cite{Traud-2012-facebook}, each of which is a set of users at a particular university. We computed the global clustering coefficient: $6 |K_3|/|W|$ where $|K_3|$ is the number of triangles and $|W|$ is the number of wedges, and average local clustering coefficient: the average of $2|K_3(u)|/|W(u)|$ for all nodes $u$, where $K_3(u)$ denotes triangles for which $u$ is a member. We also considered \emph{higher-order} clustering coefficients, defined in~\citet{yin2017higher} to be the fraction of appropriate motifs which are closed into 4-cliques. 

To fit the TGPA$(p,q)$ model (Section~\ref{sec:v1}) to the real world networks, we noted that the average degree of our model, the total degrees divided by the number of nodes, is approximately $(2m(1-p) + 2m)/(m(1-p) +1)$. Choosing the average degree gives a relationship between parameters $m$ and $p$. We tested various sets of parameters to obtain the best possible fit. We started both TGPA and GPA with a $k$-node clique. Table~\ref{table:Clust} lists the parameters we chose for the TGPA model as TGPA($n, p, k, m$), which produces an $n$ node graph starting from a $k$ node clique. For comparison we also fit the GPA model (Section~\ref{sec:GPA}). The parameters in Table~\ref{table:Clust} are GPA($n,p,r,k$). Notice that TGPA maintains much more significant clustering coefficients across all measures.

\section{The Eigenvalue Power-law is robust}
\label{sec:per}
\begin{marginfigure}
	\begin{center}
		\includegraphics[width=\linewidth,clip, trim = 0 0 0 1cm]{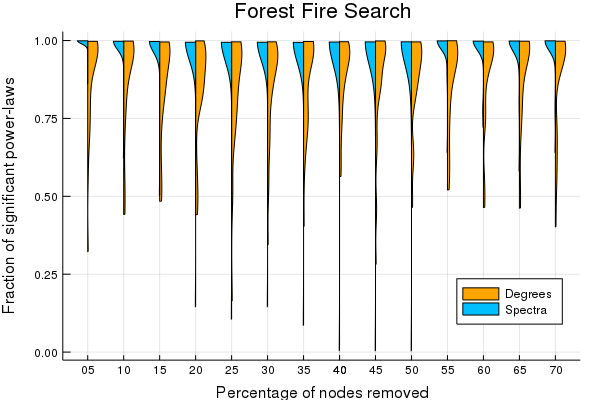}
		\caption{Forest Fire Sampling graphs generated using the preferential attachment model. }
		\label{fig:ffpa}
	\end{center}
\end{marginfigure}
As discussed at length in this paper, preferential attachment has long been used to describe the reason why we find power-law distributions in the degrees of real world networks. There are many other empirical and theoretical studies on the presence of power-laws in spectra~\cite{ChungLuVu2003, goh2001spectra, MihailPapadimitriou2002,Eikmeier-2017-power-laws}. Given that many real-world networks should have power-laws in both the eigenvalues and the degrees, this suggests that one should be easier and more reliable to detect than the other. Our recent paper~\cite{Eikmeier-2017-power-laws} gives evidence that power-laws are more likely to be present in the spectra than in degree distributions. An explanation for this observation may come from the way in which we obtain data, rather than a true feature of the data itself. Consider for a moment that the ``real data'' that is used in so many studies is not the full set of data. Instead, due to sampling or missing data the ``real data'' is actually some perturbation of the true set. If the underlying graph has both a power-law in the degrees and eigenvalues, then it is possible this observation just reflects the robustness of the eigenvalue power-law to the type of network sampling that occurred. There are many methods of sampling graphs, and studying properties of sampled graphs is a well-studied field~\cite{leskovec2006sampling,stumpf2005sampling,stumpf2005subnets,lovasz1993random,lee2006statistical,ebrahimi2017complex,schoenebeck2013potential}.

\begin{fullwidthfigure}
	\begin{center}
		\includegraphics[width = 0.32\linewidth]{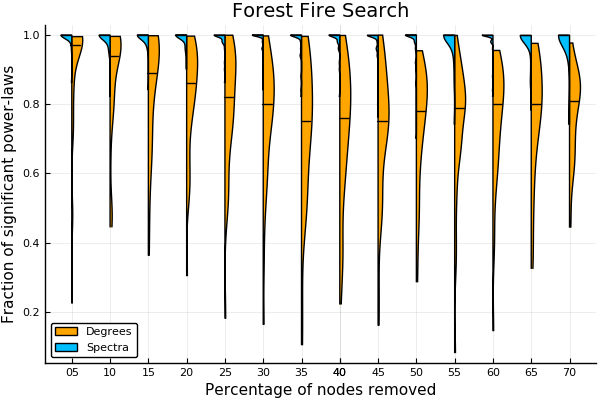} \hfill
		\includegraphics[width = 0.32\linewidth]{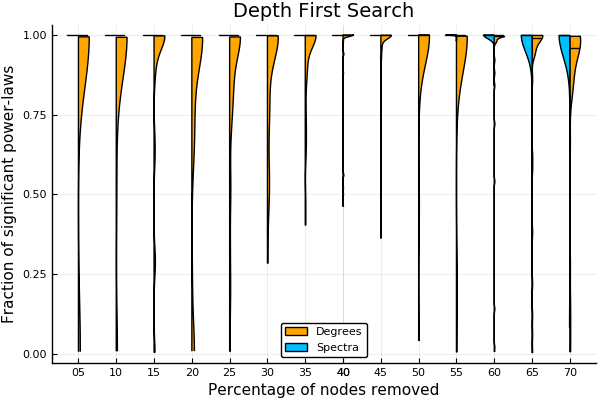} \hfill
		\includegraphics[width = 0.32\linewidth]{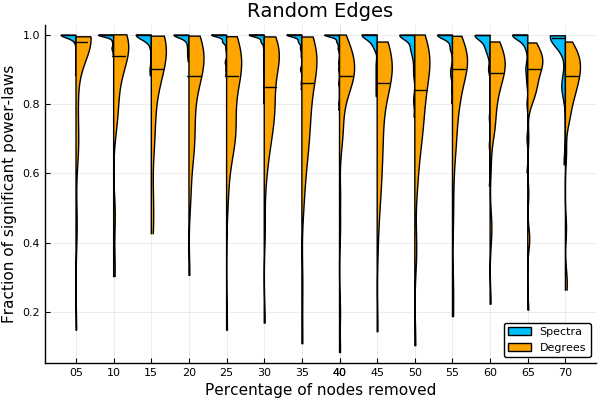}
		\caption{35 TGPA graphs with power-law exponents between 2-5, sub-sampled in various ways. On the top, the graphs were sampled using a forest fire search on a random seed node; in the middle a depth first search;  and on the bottom, by sampling random edges. Note that when there appears to be \emph{no} violin plot (e.g. most spectra in DFS) that means 100$\%$ of the sampled graphs had significant power-laws. The horizontal lines give the median.}
		\label{fig:per}
	\end{center}
\end{fullwidthfigure}

Because the TGPA model produces graphs with reliable power-law exponents in \emph{both} the degrees and spectra, as well as clustering,  (Sections~\ref{sec:DegreeTGPA},~\ref{sec:SpectraTGPA}) this makes it a good model to study this potential effect. We generated 35 TGPA graphs of size 5000 with theoretical degree power-law exponents between $2-5$. For each graph, we detect that it has a statistically significant power-laws in both the degrees and spectra. The distributions were tested for power-laws using the method of~\citet{ClausetShaliziNewman2009}. We then perturbed each of the networks in three ways: In the first method we sampled random sets of edges of the graphs; in the second method we did a depth first search, starting at a random seed node; and third we did a forest fire sampling procedure from a random seed node (at each time step a fire ``spreads'' to each neighbor with some probability based on a burn rate). In each case, we ran the perturbation until a certain percentage of the nodes were obtained. And in each case we did the perturbation 50 times. The results of this experiment are shown in violin plots in Figure~\ref{fig:per}. Notice that the degree plots have a much larger spread in most cases, and the spectra almost always retains its power-law. 

TGPA isn't the only model with power-laws in both the degrees and eigenvalues (as we've proved about the GPA model in this manuscript). So we ask if we see these same sampling effects on \emph{other} classes of models. When trying the same experiment on PA models, we don't see as much variation between the degrees and spectra. See Figure~\ref{fig:ffpa} for an example of the forest fire sampling procedure. The other sampling procedures give similar results. We believe that the local structure of TGPA is necessary to see the effects of sampling. Note that we used the same size graphs, same number and number of samples as in the TGPA experiment.

\section{Conclusions and Discussion}
\label{sec:conclusion}

In this paper we presented the triangle generalized preferential attachment model, a graph model which incorporates direct triangle formulation into the preferential attachment model. Furthermore, we provided extensive analysis of this model, showing that the degree and spectral distributions fit power-law distributions. We also provided extended analysis of the generalized preferential attachment model found in~\citet{avin2017improved}.

We further showed that triangle generalized preferential attachment has improved clustering coefficients over traditional preferential attachment models. Of course there are other models which exhibit higher order clustering that lack theoretical proofs of power-laws in both degrees and eigenvalues~\cite{Eikmeier-preprint-HyperKron,holme2002growing,lattanzi2009affiliation}. 

Our new model provides a useful platform for studying real-world network data. We found that it provided far more clustering in network data compared with the standard preferential attachment model. We further showed that if a graph has a significant power-law in both the spectra and degrees, under various sampling procedures (forest fire sampling, depth first search, and random edges), the spectral power-law remains much more frequently in sampled data. We provide this experiment as evidence for one possibly reason why we may see power-law distributions in the spectra of real networks more often than in the degrees. In the future, we plan to study further generalizations of higher-order preferential attachment graphs. 
\begin{fullwidth}
\bibliographystyle{dgleich-bib}
\bibliography{All_Docs,99-refs}
\end{fullwidth}

\end{document}